\newcommand{\demandSymbol}{x}
\newcommand{\cui}[1]{#1}
\newcommand{\cuit}[1]{#1}
\newcommand{\cuic}[1]{}
\newcommand{\cuid}[1]{#1}
\newcommand{\cuie}[1]{#1}
\newcommand{\onlyJournalCand}[1]{#1}
\newcommand{\onlyJournal}[1]{#1}
\newcommand{\onlyConf}[1]{}
\newcommand{\demtfinal}[3]{\ensuremath{\hat{#3{\demandSymbol}}^{#1}_{#2}}}
\newcommand{\demt}[2]{  
  \ifthenelse{\isempty{#1}}
  {
    \ensuremath{\demtfinal{#1}{#2}{\mathbf}}
  }
  {
    \ifthenelse{\isempty{#2}}{
      \ensuremath{\demtfinal{#1}{#2}{\mathbf}}
    }
    {
      \ensuremath{\demtfinal{#1}{#2}{}}
    }
  }
}
\newcommand{\demfinal}[4]{\ensuremath{#3{#4}^{#1}_{#2}}}
\newcommand{\dem}[3][\demandSymbol]{
  \ifthenelse{\isempty{#2}}
  {
    \ensuremath{\demfinal{#2}{#3}{\mathbf}{#1}}
  }
  {
    \ifthenelse{\isempty{#3}}{
      \ensuremath{\demfinal{#2}{#3}{\mathbf}{#1}}
    }
    {
      \ensuremath{\demfinal{#2}{#3}{}{#1}}
    }
  }
}
\newcommand{\AvgCosMC}[2]{\dem[L]{#1}{#2}}
\newcommand{\dembfinal}[3]{\ensuremath{#3{z}^{#1}_{#2}}}
\newcommand{\demb}[2]{  
  \ifthenelse{\isempty{#1}}
  {
    \ensuremath{\dembfinal{#1}{#2}{\mathbf}}
  }
  {
    \ifthenelse{\isempty{#2}}{
      \ensuremath{\dembfinal{#1}{#2}{\mathbf}}
    }
    {
      \ensuremath{\dembfinal{#1}{#2}{}}
    }
  }
}
\newcommand{\demsum}[3][\demandSymbol]{\demfinal{#2}{#3}{}{#1}}
\newcommand{\numPlayer}{N}
\newcommand{\numTimeSlots}{T}
\newcommand{\demConstraint}[1]{\beta_{#1}}
\newcommand{\Dem}[2]{\mathcal{X}^{#1}_{#2}}
\newcommand{\payoff}[2]{\pi^{#1}_{#2}}
\newcommand{\revenue}[2]{E^{#1}_{#2}}
\newcommand{\rateofrevenue}[2]{\phi^{#1}_{#2}}
\newcommand{\payment}[2]{M^{#1}_{#2}}
\newcommand{\linkcost}[2]{J^{#1}_{#2}}
\newcommand{\linkcostdiff}[2]{K^{#1}_{#2}}
\newcommand{\maxWholeSaleCapacity}{U}
\newcommand{\totalFuelCost}{C}
\newcommand{\totalFuelCostMCP}{\mathcal{C}}
\newcommand{\avgCostPrice}{A}
\newcommand{\IBP}{\mathcal{M}}
\newtheorem{lem}{Lemma}
\newtheorem{defi}{Definition}
\begin{document}
\title{Noncooperative Games for Autonomous Consumer Load Balancing over Smart Grid}
\author{Tarun~Agarwal and~Shuguang~Cui 
  \thanks{The
  authors are with the Department of Electrical and Computer
  Engineering, Texas A\&M University, College Station, TX 77843-3128,
  USA. (Email: atarun@neo.tamu.edu, cui@ece.tamu.edu)}\onlyConf{\\
  Department of Electrical and Computer Engineering\\
  Texas A\&M University\\
  Email: atarun@neo.tamu.edu, cui@ece.tamu.edu}}

\maketitle

\begin{abstract}
  Traditionally, most consumers of electricity pay for their
  consumptions according to a fixed rate. With the advancement of Smart
  Grid technologies, large-scale implementation of variable-rate
  metering becomes more practical.  As a result, consumers will be
  able to control their electricity consumption in an automated
  fashion, where one possible scheme is to have each individual
  maximize its 
  own utility as a
  noncooperative game.  In this paper, noncooperative games are
  formulated among the \cui{electricity} consumers \cui{in} Smart Grid with two real-time
  pricing schemes, where the Nash equilibrium operation points are
  investigated for their uniqueness and load balancing properties. The
  first pricing scheme charges a price according to the average cost
  of electricity borne by the retailer and the second one charges
  according to a time-variant increasing-block price\cui{, where for each scheme, a} zero-revenue model and a \cui{constant-rate revenue} model are
  considered.  \onlyJournal{\cui{In addition,} the relationship between the studied
  games and certain \cuie{competitive routing games} 
  from
  the computer networking community, known as atomic flow games, is \cui{established, for which it is} shown that
  the proposed noncooperative game formulation falls under the class
  of atomic splittable flow games.}  The Nash equilibrium is shown to
  exist for four different \cui{combined} cases corresponding to \cui{the} two pricing schemes
  and \cui{the} two revenue models, and is 
  unique for three of the
  cases under certain conditions. It is \cui{further} shown that both pricing
  schemes lead to similar electricity loading patterns when consumers
  are only interested in minimizing the electricity costs \cui{without any other profit considerations}.
  Finally, the conditions under which the increasing-block pricing
  scheme is preferred over the average-cost based pricing scheme are
  discussed.
\end{abstract}


\begin{IEEEkeywords}
  Game Theory, Noncooperative Game, \onlyJournal{Atomic Splittable Flow Game,} Nash
  Equilibrium, Smart Grid, Real Time Pricing, Increasing-Block Pricing.
\end{IEEEkeywords}
\bstctlcite{IEEEexample:BSTcontrol}

\section{Introduction}
\label{sec:introduction}

In the traditional power market, electricity consumers usually
pay a
fixed 
retail price for their electricity usage. This price
\cui{only} changes on a seasonal or yearly basis. However, it has been
long recognized in the economics community that charging consumers a
flat rate for electricity creates allocative inefficiencies, i.e.,
consumers do not pay equilibrium prices \cui{according to} 
their consumption \cui{levels} \cite{allcott2009rethinking}. 
This \cui{was} 
shown
through an example in \cite{borenstein2005time}, which illustrates how
flat pricing causes deadweight loss at off-peak times and excessive
demand 
at the peak times. The latter 
\cui{may lead to small-scale}
blackouts in \cui{a} short run and excessive capacity buildup over \cui{a}
long run. \cui{As a solution,} variable-rate metering that reflects the real-time cost of \cui{power}
generation can \cui{be used to} influence consumers to defer their power \cui{consumption away} from the
peak times. The reduced peak-load can significantly reduce the need
for expensive \cui{backup} generation during peak times and excessive \cui{generation} capacity.

The main technical hurdle in implementing 
real-time pricing has
been the lack of cost-effective \cui{two-way} smart metering, which can communicate
real-time prices to consumers and their consumption levels back to the
energy provider. \cui{In addition,} the claim of social benefits from real-time pricing
also assumes that the consumer demand is elastic and responds to price
changes \cui{while} 
traditional consumers do not possess the equipments that
enable them to quickly alter their demands \cui{according to the} 
changing \cui{power} prices\cui{. 
S}ignificant research efforts on real-time pricing have involved
estimating the consumer demand elasticity and the level of benefits
that real time pricing can achieve \cite{ allcott2009rethinking,
  holland2006short, borenstein2004long}.  
\cui{Fortunately, the above} requirements
\cui{on} smart metering and consumer adaptability 
\cui{are being fulfilled \cite{faruqui2010rethinking} as technology advances in}
cyber-enabled metering\cui{, power generation, power storage, and manufaturing automation, which is driven by the need for a  Smart Grid.}




\cuid{Such real-time pricing dynamics have been studied in the
  literature mainly with game theory \cite{fahrioglu1999designing,
    caron2010incentive, ibars2010distributed}. In particular, the
  authors in \cite{fahrioglu1999designing} provided a design mechanism 
  with \emph{revelation principle} to determine the optimal amount of
  incentive that is needed for the customers  to be willing
  to enter a contract with the utility and accept power curtailment
  during peak periods. However, they only considered a fixed pricing scheme.  
  In \cite{caron2010incentive}, the authors studied
  games among consumers under a 
  certain class of demand
  profiles at a price that is a function of day long aggregate cost of global electricity load
  of all consumers. However, the case with real-time prices was not investigated in \cite{caron2010incentive}.
  In \cite{ibars2010distributed}, a noncooperative game was studied to tackle the real-time pricing problem, where the solution was obtained by exploring the relationship with the \cuie{congestion games and} potential games. \cuie{However, the pricing schemes that we study are not amenable to transformations described in \cite{ibars2010distributed}.}}

\cuic{
Such real-time pricing dynamics have been an object of study for
sometime. A representative literature is \cite{fahrioglu1999designing}
where, during high
demand periods, peak shaving is sought through curtailing power to certain customers who are in agreement and contract with utility
towards such an arrangement. The paper analyzes the optimal
amount of incentive that need to be given to those customers so that
they are willing to enter the contract and allow themselves such
power interruptions during peak periods. At an abstract level such
contracts are a sample mechanism towards achieving
price-differentiation between consumers. However, with advancement in
technology, as smart metering becomes feasible, we can have
sophisticated real-time pricing at consumer premises, as we discuss in this paper. Such 
real-time pricing dynamics is readily approachable by game theoretic
formulations. For example, 
in \cite{caron2010incentive} the authors present a formulation with
similarities to ours for the case of average cost pricing, with the
main difference in terms of elasticity of the consumer demand and the
pricing scheme.  In \cite{caron2010incentive} it is assumed that the
consumer demand is required to start at a certain time instant during
the day and maintains that level for upto certain duration, while we
only require the sum load over the day to be above a certain
threshold, without any constraint of start or end time and demand
continuity. We take such an viewpoint as we assume that, with respect
to the utility, consumer load will become increasingly elastic with
advancement in energy storage at consumer premises and automated
controlling of power consumption. Secondly, for both the pricing
schemes that we considered the cost of electricity varies over time
slots within a day, the pricing scheme considered in
\cite{caron2010incentive} is the average-cost for an entire day.}

In this paper we formulate noncooperative games
\cite{tirole1988theory, başar1999dynamic} among the consumers 
with two real-time pricing schemes under more general load profiles and revenue models. The first pricing scheme
charges a price according to \cui{the} instantaneous average cost of electricity
production and the second one charges according to a time-varying
version of increasing-block price \cite{borenstein2008equity}. We
investigate consumer demands at the Nash equilibrium operation points
for their uniqueness and load balancing properties. Furthermore, two revenue models
are considered for each of the schemes, and we show that both pricing schemes lead to similar
electricity loading patterns when consumers are interested only in the
minimization of electricity costs. 
\onlyJournal{We \cui{also} demonstrate the relationship between these games and certain
competitive routing games \cite{orda1993competitive}, known as atomic flow
games \cite{roughgarden2005selfish} from the computer networking
community. We show that the proposed noncooperative game formulation
falls under the class of atomic splittable flow games
\cite{bhaskar2009equilibria}. Specifically, we show that the
noncooperative game amongst the consumers has the same structure as \cui{that} in
the atomic splittable flow game over a two-node network with multiple
parallel links between them. }Finally we discuss the conditions under
which the increasing-block pricing scheme is preferred over the
average-cost based pricing scheme.

The \cui{rest of the} paper is organized as follows. The system model\onlyJournal{,}\onlyConf{~and} formulation of
the noncooperative game\onlyJournal{, and its relationship 
\cui{with the} atomic flow games} are
presented in Section~\ref{sec:system-model}. The game is analyzed
with different real-time pricing schemes under different revenue
models in Sections~\ref{sec:nash-equil-with} and~\ref{sec:new-pricing-scheme}, where the Nash equilibrium
properties are investigated. We conclude the paper in
Section~\ref{sec:conclusion}.



\section{System Model\onlyConf{~and Game Formulation}}
\label{sec:system-model}
\onlyConf{\subsection{System Model}}
We study the transaction of energy between a single electricity
retailer and multiple consumers. 
In \cui{each} given time slot, each consumer has a demand for electric energy (measured in
Watt-hour, Wh).  The job of the retailer is to
satisfy demands from all the consumers.
The electricity supply of the retailer is purchased from a variety of
sources over a wholesale electricity market and the retailer may
possess some generation capacity \cui{as well}. These
sources may use different technologies and fuels to generate
electricity, which leads to different marginal costs of electricity
at the retailer, where 
the marginal cost is the incremental cost incurred
to produce an additional unit of output \cite{lindeman2001ez}.
Mathematically, the marginal cost function  is
expressed as the first derivative of the total cost function. Examples of \cui{the} marginal cost function and \cui{the corresponding} total cost are presented in
Fig.~\ref{fig:mcgs1} and Fig.~\ref{fig:mcgs2}, respectively, \cui{which are} 
based on real world data from \cui{the} wholesale electricity market 
\cite{holland2006short}. Naturally, the retailer attempts to
satisfy demands by procuring the cheapest source first\onlyJournal{\footnote{In
  real life the base load, i.e., the regular power that is demanded by
  the consumers, is satisfied from sources such as hydro, coal or
  nuclear, as they are cheap. The fluctuating components of the
  demand 
  are satisfied from sources such as oil, as the power-plants based on oil
  are more flexible to control.
}}. This results in a non-decreasing marginal cost of the supply curve,
as illustrated through the example in Fig.~\ref{fig:mcgs1}. The
retailer charges each consumer a certain price for its consumption
in order to cover the cost, where
the sum payments by all the consumers should be enough to
cover the \cui{total} 
\cui{cost and certain} profit margin \cui{set by the retailer or regulatory body}. In our model we assume that all these 
are
incorporated within the marginal cost of electricity.
  \begin{figure}[!h]
    \centering
    \subfigure[Marginal cost  as a function of
    supply]{\begin{tikzpicture}[scale = 0.5, domain=0:12, samples=100]
        \draw[thick,color=gray,step=4cm, dashed] (0,0) grid (12,12);
        \draw[->] (-1,0) -- (12.5,0) node[midway, sloped, below]
        {Quantity Supplied (MWh)};
\draw[->] (0,-1) -- (0,12.5)
node[midway, sloped, above] {Marginal Cost (\$/MWh)};
\draw plot[id=xwok] function{1/(1+exp(-10*x))+1/(1+exp(-10*(x-1)))+1/(1+exp(-10*(x-4)))+1/(1+exp(-10*(x-7)))+2/(1+exp(-10*(x-10)))+2/(1+exp(-10*(x-11)))+exp(x/12)};
      \draw (12,0) node[below] {60000};
      \draw (0,12) node[left] {\$100};

      \draw (0.4,2.3)--(0.4,2.3) node[midway, sloped] {Hydro};
      \draw (2.3,3.2)--(2.3,3.2) node[midway, sloped] {Nuclear};
      \draw (5.4,4.5)--(5.4,4.5) node[midway, sloped] {Coal};
      \draw (8.2,5.9)--(8.2,5.9) node[midway, sloped] {Natural Gas};
      \draw (10.4,8.3)--(10.4,8.3) node[midway, sloped] {Oil};
      \draw (11.5,10.6)--(11.5,10.6) node[midway, sloped] {Oil};

\end{tikzpicture}
    \label{fig:mcgs1}}
\subfigure[Total cost as a function of supply]{\begin{tikzpicture}[scale = 0.5, domain=0:12, samples=100]
\draw[thick,color=gray,step=4cm,
dashed] (0,0) grid (12,12);
\draw[->] (-1,0) -- (12.5,0)
node[midway, sloped, below] {Quantity Supplied (MWh)};
\draw[->] (0,-1) -- (0,12.5)
node[midway, sloped, above] {Cost (\$)};
\draw plot[id=xwok2] function{ (1/10*log(-exp(10*x)-1)+1/10*log(-exp(10*x)-exp(10))+1/10*log(-exp(10*x)-exp(40))+1/10*log(-exp(10*x)-exp(70))+2/10*log(-exp(10*x)-exp(100))+2/10*log(-exp(10*x)-exp(110))+12*exp(x/12) - (1/10*log(-exp(10*0)-1)+1/10*log(-exp(10*0)-exp(10))+1/10*log(-exp(10*0)-exp(40))+1/10*log(-exp(10*0)-exp(70))+2/10*log(-exp(10*0)-exp(100))+2/10*log(-exp(10*0)-exp(110))+12*exp(0/12)))/2};
      \draw (12,0) node[below] {60000};
      \draw (0,12) node[left] {\$1M};
\end{tikzpicture}
    \label{fig:mcgs2}}

  \caption{A hypothetical marginal cost of supply and the
    corresponding total cost curve as seen by the retailer in the
    wholesale market within a single time slot. Supply is from five
    different sources: hydroelectric, nuclear, coal, natural gas, and
    oil. Two different generators may use different technologies for
    power generation thus incurring different marginal costs with the
    same fuel (e.g., the two different cost levels for oil in
    Fig.~\ref{fig:mcgs1}).}
    \label{fig:mcgs}
  \end{figure}
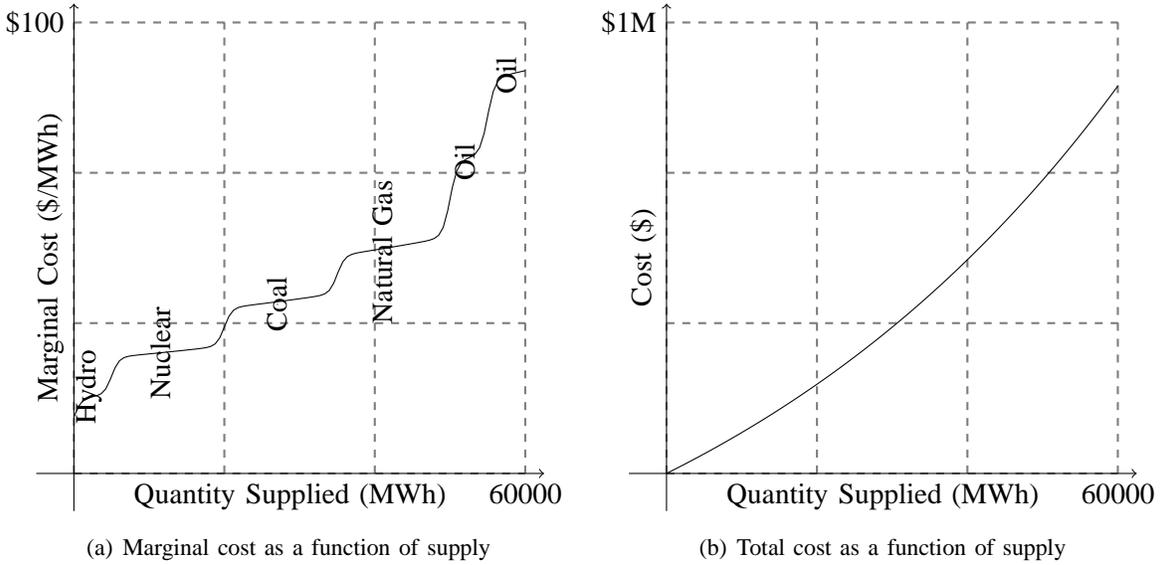

  While the retailer aims to procure sufficient supply to meet the sum
  demand of its consumers in each time slot, in reality, the supply is
  limited by the generation capacity available in the wholesale
  electricity market. Thus, the maximum sum load that the retailer can
  service \cui{bears} an upper limit and we model this capacity limit by
  setting the marginal cost of electricity to infinity when the sum load
  exceeds a predetermined threshold.
  Each consumer has an energy demand in 
\cui{each}  time slot and it pays
  the retailer \cui{at a} 
  price \cui{that is set by the retailer} 
  such that, in each time slot, the sum of payments made by all
  consumers meets the total cost in that slot.  \cui{As such,} a
  particular consumer's share of this bill depends on the retailer's
  pricing scheme, which is a function of the demands from all the
  consumers. Accordingly, as the total load varies over time, each
  consumer operates over a time-variant price with time-slotted
  granularity. \cui{We assume that} each consumer \cui{has a} 
  total demand
  for electricity over each day\footnote{Here we adopt one day as an operation period that contains a certain number of time slots. Obviously, such a choice has no impact on the analytical results in this paper.}, which can be distributed throughout
  the day in a time-slotted manner, to maximize certain utility
  function. \cui{Next,} we model such individual load balancing behaviors as a
  noncooperative game.

\subsection{Noncooperative Load Balancing Game}
\label{sec:nonc-game-form}

The noncooperative game between these consumers is formulated as
follows. Consider a group of $\numPlayer$ consumers, who submit their
daily demands to a retailer in a time-slotted pattern at the beginning
of the day (which contains $\numTimeSlots$ time slots).  These
consumers are selfish and aim to maximize their \cui{individual} 
utility/payoff \cui{functions;} hence they do not cooperate with each other to manage
their demands. Each consumer $i$ has a minimum total daily requirement
of energy, $\demConstraint{i}\geq 0$, which is split over the
$\numTimeSlots$ time slots. Let $\dem{i}{t}$ denote the $i$th
consumer's demand in the $t$th time slot.  A consumer can demand any
value $\dem{i}{t}\geq 0$ (negativity constraint) with
$\sum_t\dem{i}{t} \geq\demConstraint{i}$ (demand constraint).
Let
  \onlyJournal{\[\dem{i}{}=\{\dem{i}{1}, \dem{i}{2} ,\ldots,
  \dem{i}{t}, \ldots, \dem{i}{\numTimeSlots} \},\]}\onlyConf{$\dem{i}{}=\{\dem{i}{1}, \dem{i}{2} ,\ldots,
  \dem{i}{t}, \ldots, \dem{i}{\numTimeSlots} \}$,} represent the $i$th
  consumer's demand vector, which is called the strategy for the $i$th
  consumer. Let
    \onlyJournal{\[\dem{}{t} = \{
    \dem{1}{t} ,\ldots, \dem{\numPlayer}{t}\}, \]}\onlyConf{$\dem{}{t} = \{
    \dem{1}{t} ,\ldots, \dem{\numPlayer}{t}\}$,} represent the demand
    vector from all consumers in time slot $t$ \cui{with $\demsum{}{t} = \sum_i\dem{i}{t}$}. Let $\dem{}{}$
    represent the set $\{\dem{1}{}, \ldots, \dem{\numPlayer}{}\}$. 


    The payoff or utility for consumer $i$ is denoted by
    $\payoff{i}{}$ which is the 
    difference between the total revenue it generates from the
    purchased electricity and its cost.
    In particular,
    let $\revenue{i}{t}$, a function of $\dem{i}{t}$, represent the revenue generated by the $i$th
    consumer in the $t$th time
    slot
    and $\payment{i}{t}$, a function of $\dem{}{t}$, represent its payment to the retailer for
    purchasing $\dem{i}{t}$.
    Then the payoff $\payoff{i}{}$, to be
    maximized by consumer $i$, is given by \[\payoff{i}{} =
    \sum_{t\in\{1, \ldots, \numTimeSlots\}} \left(\revenue{i}{t} -
      \payment{i}{t} \right).\] Since $\payment{i}{t}$ is a function
    of $\dem{}{t}$, we see that the consumer payoff is influenced by
    its load balancing strategy and those of other consumers. 

    \cui{We consider the problem} of maximizing 
    the payoff at each consumer by designing the
    distributed load balancing strategy $\dem{i}{}$'s, \cui{under} 
    two
    real-time pricing schemes \cui{set by the retailer}. The first \cui{one} is the average-cost based
    pricing scheme and the second \cui{one} is the increasing-block
    pricing scheme. Specifically, for the first scheme the retailer
    charges the consumers the average cost of electricity
    procurement \cui{that is only dependent on the sum demands, $\demsum{}{t}$, from all the consumers}. For the second scheme, the retailer charges according to
    a  marginal cost function that depends on the vector of
    demands from all consumers, $\dem{}{t}$. 

Let $\totalFuelCost(\demsum{}{})$
represent the cost of $\demsum{}{}$ units of electricity, to the
retailer, \cui{from} 
the wholesale market
    (an example function is plotted in
    Fig.~\ref{fig:mcgs2}).
    Then under the average-cost based pricing, the price per unit
    charged to the consumers is given by
\begin{equation}
  \label{eq:2}
\onlyJournal{  \avgCostPrice(\demsum{}{t}) = \frac{\totalFuelCost(\demsum{}{t})}{\demsum{}{t}},}
\onlyConf{  \avgCostPrice(\demsum{}{t}) = \totalFuelCost(\demsum{}{t})/\demsum{}{t}},
\end{equation}
and at time $t$ consumer $i$ pays
\begin{equation}
\payment{i}{t}=\dem{i}{t}\avgCostPrice(\demsum{}{t})\label{eq:cost1}
\end{equation}
for consuming $\dem{i}{t}$ units of electricity. It is easy to see
that $\sum_i\payment{i}{t} = \totalFuelCost(\demsum{}{t})$, i.e., with
average-cost based pricing the total payment made by the consumers
covers the total cost to the retailer. Note that $\totalFuelCost'(\demsum{}{t})$
gives the marginal cost function in the wholesale market, henceforth
denoted by $\totalFuelCostMCP(\demsum{}{t}) = \totalFuelCost'(\demsum{}{t})$ in the context of
increasing-block pricing (an example marginal cost
curve is plotted in Fig.~\ref{fig:mcgs1}). For reasons we discussed
earlier, in the context of electricity market, the marginal cost
$\totalFuelCostMCP(\demsum{}{t})$ is always non-negative and
non-decreasing such that $\totalFuelCost(\demsum{}{t})$ is always
positive, non-decreasing, and
convex. Briefly, we note that as \cui{the retailer} capacity is constrained by a
predetermined upper limit $\maxWholeSaleCapacity$, we model this
constraint as $\totalFuelCost(\demsum{}{t})=\infty,~\forall\demsum{}{t}>
  \maxWholeSaleCapacity$\cui{; obviously} $\dem{i}{t}\leq\maxWholeSaleCapacity$ is \cui{an implicit constraint} 
  on the demand
  $\dem{i}{t}$ \cui{for any rational consumer}.


The second scheme is a time-variant version of the increasing-block
pricing scheme \cite{borenstein2008equity}. With a typical increasing-block pricing scheme,
consumer $i$ is charged a certain rate $b_1$ for its first $z_1$ units
consumed, then charged rate $b_2~ (> b_1)$ for additional $z_2$ units,
and charged rate $b_3~ (> b_2)$ for additional $z_3$ units, and so
on. The $b$'s and $z$'s describe the marginal cost price for the
 commodity.  In our scheme we design a marginal cost function, \cui{which}
 retains the increasing nature of increasing-block pricing, such that it depends on
$\dem{}{t}$ and the function $\totalFuelCost(\cdot)$. 
Consumer $i$ pays an
amount determined by the marginal cost function $\IBP(\demandSymbol,
\dem{}{t})$, applicable to all consumers at time slot $t$. In
particular consumer $i$ pays

\begin{equation}
\payment{i}{t}=\int_0^{\dem{i}{t}}\IBP(\demandSymbol, \dem{}{t})d\demandSymbol\label{cost2}
\end{equation}
for consuming $\dem{i}{t}$ units of electricity where $\IBP(\cdot)$ is chosen
as \[\IBP(\demandSymbol, \dem{}{t}) =  \totalFuelCostMCP\left(\sum_j \min{(\demandSymbol,\dem{j}{t})}\right),\]
such that $\sum_i\payment{i}{t} = \totalFuelCost(\demsum{}{t})$ is
satisfied. \cui{An intuition behind this pricing scheme is to penalize consumers with relatively larger demands.}
\cui{Note that in this case, $\dem{i}{t}\leq \maxWholeSaleCapacity$ is implicitly assumed by letting $\totalFuelCost(\cdot)=\infty~\forall \dem{i}{t}> \maxWholeSaleCapacity$ and hence $\payment{i}{t}=\infty~\forall \dem{i}{t}> \maxWholeSaleCapacity$.}

For each of the two pricing schemes, we study two different revenue
models. For the first \cui{one} we set $\revenue{i}{t}$ as zero for all
consumers over all time slots, which leads to payoff maximization
being the same as cost minimization from the point of view of the
consumers. For the second \cui{one} we assign consumer $i$ a constant
revenue rate $\rateofrevenue{i}{t}$ at each time slot $t$, which gives
$\revenue{i}{t} = \rateofrevenue{i}{t} \dem{i}{t}$ and leads to payoff maximization
being the same as profit maximization.

\onlyJournalCand{
\subsection{Atomic Flow Games with Splittable Flows}
\label{sec:parall-with-atom}

The noncooperative game that we have formulated in the previous
section is related to the following problem in the network routing
literature \cite{orda1993competitive,
  roughgarden2005selfish}. Consider several agents each \cui{trying} 
to establish paths from a specific source node to some destination
node in order to transport a fixed amount of traffic. In the context
of Internet, \cui{each} agent can be viewed as a manager of packet
routing. In the context of transportation, \cui{each} agent \cui{can be} a company
routing its fleet vehicles across the network of roads. The problem
here is of competitive routing between agents, where each agent needs
to deliver a given amount of flow over the network from its designated
origin node to the corresponding destination node. An agent can choose
how to divide its flow amongst the available routes. On each link the
agents experience a certain delay. In the case of computer networks,
if many agents collectively route a large number of packets through a
particular link, the packets will experience larger delays; and beyond
a certain level, the link may even start dropping packets, resulting
in infinite delay.  Such a delay can be referred to as cost, which is a
function of the link congestion or the total flow through the
link. The cost of a path is the sum of the link costs along the route.

To show the relationship between our noncooperative consumer load
balancing problem and the above routing problem, we can reformulate
the load balancing problem into the following routing game over \cui{a network with two nodes and} 
multiple links 
\cite{orda1993competitive}. We use notations similar to
\cui{there in} \cite{orda1993competitive} in the interest of readability. Let there be
$\numPlayer$ agents 
who share a common source
node and a common destination over a two-node network connected by
$\numTimeSlots$ parallel links (see
Fig.~\ref{fig:twonode}).  
\begin{figure}[htp]
  \centering \subfigure[Flows from the $i$th
  agent.]{ \begin{tikzpicture}[>=latex, scale=2] \node[circle,draw] (s)
      at (0,0) {s}; \node[circle,draw] (t) at (2,0) {t};
      
      \draw[->] (s.north) .. controls (0,1) and (2,1) .. (t.north) node[midway,above]{$\dem{i}{1}$}node[midway, below]{$\vdots$};

      \draw[->] (s.south) .. controls (0,-1) and (2,-1) .. (t.south)node[midway,above]{$\dem{i}{\numTimeSlots}$};

      \draw[->] (s.east) -- (t.west) node[midway, below]{$\vdots$}
      node[midway, above]{$\dem{i}{t}$};
    \end{tikzpicture}} \subfigure[Sum of flows from all the
  agents.]{ \begin{tikzpicture}[>=latex, scale=2] \node[circle,draw]
      (s) at (0,0) {s}; \node[circle,draw] (t) at (2,0) {t};
      
      \draw[->] (s.north) .. controls (0,1) and (2,1) .. (t.north) node[midway,above]{$\demsum{}{1}$}node[midway, below]{$\vdots$};

      \draw[->] (s.south) .. controls (0,-1) and (2,-1) .. (t.south)node[midway,above]{$\demsum{}{\numTimeSlots}$};

      \draw[->] (s.east) -- (t.west) node[midway, below]{$\vdots$}
      node[midway, above]{$\demsum{}{t}$};
    \end{tikzpicture}} \subfigure[Cost for different links for the
  $i$th agent.]{ \begin{tikzpicture}[>=latex, scale=2]
      \node[circle,draw] (s) at (0,0) {s}; \node[circle,draw] (t) at
      (2,0) {t};
      
      \draw[->] (s.north) .. controls (0,1) and (2,1) .. (t.north) node[midway,above]{$\linkcost{i}{1}$}node[midway, below]{$\vdots$};

      \draw[->] (s.south) .. controls (0,-1) and (2,-1) .. (t.south)node[midway,above]{$\linkcost{i}{\numTimeSlots}$};

      \draw[->] (s.east) -- (t.west) node[midway, below]{$\vdots$}
      node[midway, above]{$\linkcost{i}{t}$};
    \end{tikzpicture}}
  \caption{A two node network with $\numTimeSlots$ links between
    source $s$ and destination $t$.}
  \label{fig:twonode}
\end{figure}
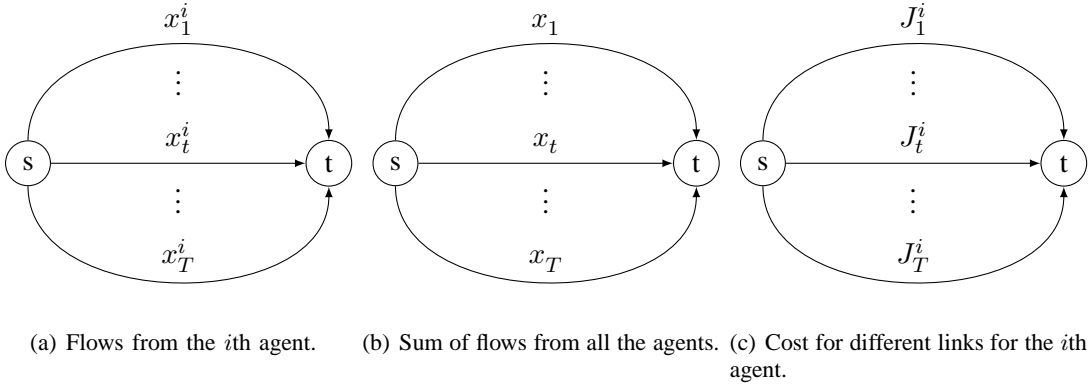
It is assumed that the agents do not cooperate. Each agent $i\in
\{1,\ldots,\numPlayer\}$ has a \cui{minimum} throughput demand $\demConstraint{i}$,
which can be split among the $\numTimeSlots$ links as chosen by the
agent.  
Let $\dem{i}{t}\geq 0$ denote the flow that
agent $i$ sends through link $t\in\{1,\ldots,\numTimeSlots\}$.  The
sum of $\dem{i}{t}$ should add upto $\demConstraint{i}$, i.e.,
$\demConstraint{i}=\sum_t\dem{i}{t}$
. Let $\demsum{}{t}=\sum_i\dem{i}{t}$, and $\dem{}{t} = \{\dem{1}{t},
\ldots, \dem{i}{t}, \ldots, \dem{\numPlayer}{t}\}$. The flow vector
for agent $i$ is denoted by the vector $\dem{i}{} = \{\dem{i}{1},
\ldots, \dem{i}{t}, \ldots, \dem{i}{\numTimeSlots}\}$. The system flow
vector is the collection of all agent flow vectors, denoted by
$\dem{}{}=\{\dem{1}{}, \ldots, \dem{i}{}, \ldots,
\dem{\numPlayer}{}\}$. A given $\dem{i}{}$ is feasible if its
components obey the non-negativity constraint and the demand
constraints. Let $\Dem{i}{}$ be the set of all feasible choices of
$\dem{i}{}$ for agent $i$, and $\Dem{}{}$ be the set of all feasible
choices of $\dem{}{}$.

Let $\linkcost{i}{}(\dem{}{})$ denote the \emph{cost} for each agent
$i$, \cui{which} it wishes to minimize.  \cui{Since} $\linkcost{i}{}(\dem{}{})$ is a
function of the flow vector of all the agents, the best response of a
given agent is a function of the responses \cui{from} all the \cui{other} agents; and
hence we can have a noncooperative game formulation. The Nash solution
of the game is defined as the system flow vector such that none
of the agents can unilaterally improve their performance. Formally,
$\demt{}{}\in \Dem{}{} $ is a Nash Equilibrium Point (NEP) if the
following condition holds for all agents
\begin{equation*}
  \begin{split}
    \linkcost{i}{}(\demt{}{})
    &=\min_{\dem{i}{}\in
      \Dem{i}{}}\linkcost{i}{}(\demt{1}{},
    \ldots, \demt{i-1}{}, \dem{i}{}, \demt{i+1}{}, \ldots,
    \demt{\numPlayer}{}),
  \end{split}
\end{equation*}
\cui{ where $\demt{j}{}$ is the demand vector for the $j$th agent.} The above noncooperative game is known as an \emph{atomic splittable
  flow game} \cite{bhaskar2009equilibria,
  roughgarden2005selfish2}. 
%
%
In \cite{orda1993competitive}, the existence of NEP is proved for \cui{the}
atomic splittable flow game over the two-node network with parallel
links \cui{when} the following five assumptions (G1-G5) are satisfied for the
cost function.

\begin{itemize}
\item[G1:] $\linkcost{i}{}$ is the sum of link cost functions, i.e.,
  $\linkcost{i}{}(\dem{}{}) =
  \sum_{t=1}^{\numTimeSlots}\linkcost{i}{t}(\dem{}{t})$.
\item[G2:] $\linkcost{i}{t}:[0,\infty)^{\numPlayer}\rightarrow
  [0,\infty]$ is a continuous function.
\item[G3:] $\linkcost{i}{t}$ is convex over $\dem{i}{t}$.
\item[G4:] Wherever finite, $\linkcost{i}{t}$ is continuously
  differentiable over $\dem{i}{t}$.
\item[G5:] Sum capacities of all links is greater than the sum 
  demands from all the agents.
\end{itemize}
The last assumption (G5) mentioned here is a simplification of the
original assumption mentioned in \cite{orda1993competitive}, applicable
to two-node networks, \cui{while the} original form of the assumption applies to
more general networks. The consequence of this assumption is that,
at Nash equilibrium, all users \cui{incur finite link costs, i.e., $\linkcost{i}{t}<\infty,~\forall  i,t$}. 

As a side-note, in \cite{orda1993competitive}, the uniqueness of NEP is further
imposed for \cui{the} two-node network if the cost function $\linkcost{i}{t}$
additionally complies with the following assumptions:
\begin{itemize}
\item[A1:] $\linkcost{i}{t}$ is a function of two arguments, namely
  agent $i$'s flow on link $t$ and the total flow on that link, i.e.,
  $\linkcost{i}{t}(\dem{}{t})=\bar{\linkcost{i}{t}}(\dem{i}{t},\demsum{}{t})$.
\item[A2:] $\bar{\linkcost{i}{t}}$ is increasing over each of its two
  arguments.
\item[A3:] Let $\linkcostdiff{i}{j}=\frac{\partial
    \bar{\linkcost{i}{t}}}{\partial \dem{i}{t}}$. Wherever $\linkcost{i}{t}$
  is finite, $\bar{\linkcost{i}{t}}$ is finite, and $\linkcostdiff{i}{t} =
  \linkcostdiff{i}{t}(\dem{i}{t},\demsum{}{t})$ is strictly increasing
  in each of its two arguments.
\end{itemize}
In particular, functions that comply with the assumptions G1-G5 and
A1-A3 are referred to as \textit{type-A} functions in
\cite{orda1993competitive}. In the following sections we will apply
some of the results in \cite{orda1993competitive} to facilitate our
analysis over the noncooperative consumer load balancing game. The
cost functions in our formulation do not satisfy all 
of the
assumptions A1-A3, and hence we use other means to prove uniqueness of NEP.

\cui{With our load balancing problem} for each of the two pricing schemes \cui{in our game} 
two
different revenue models are studied to provide more design insights,
which leads to two different payoff structures. In the first \cui{model} the
revenue is set to zero, such that payoff maximization is 
cost minimization. In the second \cui{model}, the rate of revenue generation
at each consumer is set as a non-zero constant, such that payoff
maximization is profit maximization.

}


\section{Nash Equilibrium with Average-Cost Pricing}
\label{sec:nash-equil-with}

For the average-cost pricing, the payment to the retailer in slot $t$ by consumer
$i$ is given by (\ref{eq:cost1}).


\subsection{Zero-Revenue Model}

In this case the revenue is set to zero as \onlyJournal{\[\revenue{i}{t} = 0,\]}\onlyConf{$\revenue{i}{t} = 0$,} 
which results in payoff maximization being the same as cost
minimization for each consumer. Specifically, the payoff for consumer
$i$ is given by
\onlyJournal{\[\payoff{i}{} =  -\sum_t \payment{i}{t} .\]}\onlyConf{$\payoff{i}{} =  -\sum_t \payment{i}{t}$.} 
The consumer \cui{load balancing} problem for consumer $i$, for \cui{$i=1,\ldots,\numPlayer$}, is given by
the following optimization problem:
      \begin{equation*}
      \begin{split}
        \text{maximize}\quad & \payoff{i}{}(\dem{i}{})=
        -\sum_t\payment{i}{t}\\
        \text{subject to}\quad 
        &\payment{i}{t} = \dem{i}{t}\avgCostPrice(\demsum{}{t}),\quad\forall t,\\
        & \sum_t\dem{i}{t}
        \geq\demConstraint{i},\\
        & \demsum{}{t}=\sum_j\dem{j}{t},\quad\forall t,
\\
         &0\leq \dem{i}{t},
         \quad\forall  t.\\
      \end{split}
    \end{equation*}
As cost to the retailer becomes infinity whenever the total
demand \cui{goes beyond} the capacity threshold for the wholesale market, \cui{i.e., when}
\onlyJournal{\[\totalFuelCost(\demsum{}{t})=\infty \quad\forall\demsum{}{t}>\maxWholeSaleCapacity,\]}\onlyConf{$\totalFuelCost(\demsum{}{t})=\infty \quad\forall\demsum{}{t}>\maxWholeSaleCapacity$,} 
the price to consumers will become infinite and their payoff will 
go to negative
infinity. 
Thus any consumer facing \cui{an} infinite cost \cui{at a particular time slot}  can manipulate the
demand vector such that \cui{the} 
cost becomes finite, \cui{which is always feasible \onlyJournal{as
    long as assumption G5 holds}\onlyConf{under the assumption that sum load demand over all times slots is less than sum supply availability}}. This implies that,
at Nash equilibrium, sum demand $\demsum{}{t}$ will be less than \cui{the} 
capacity threshold $\maxWholeSaleCapacity,~\forall t$, \cui{which} allows for a redundant constraint
$\dem{i}{t}\leq\maxWholeSaleCapacity,~\forall i, t$,
as
$\dem{i}{t}\leq\sum_i\dem{i}{t}=\demsum{}{t}\leq\maxWholeSaleCapacity$.
\cui{Such a redundant but explicit constraint} in turn
makes the feasible region for $\dem{}{}$, \cui{denoted by} $\Dem{}{}$, finite and hence
compact. The compactness property \onlyJournal{\cui{will be later}}\onlyConf{is}  utilized to prove the  Kakutani's theorem \cite{kakutani1941generalization}\onlyConf{~which in turn is required to show the existence of NEP solution}.

\onlyJournalCand{
    We have already shown 
  that this game is similar to the routing
    game described in \cite{orda1993competitive}. With the average-cost based pricing and the zero-revenue model, the effective cost
    function for agent $i$ to minimize in the routing game is
\[\linkcost{i}{t} = \payment{i}{t} =
\dem{i}{t}\avgCostPrice(\demsum{}{t})=
\frac{\dem{i}{t}}{\demsum{}{t}}\totalFuelCost(\demsum{}{t}) .\] This
cost function satisfies the assumptions G1-G5 given earlier. In
particular, G1 holds as the total payment made by the
consumers satisfies \[\payment{i}{} = \sum_t\payment{i}{t}, \] which is the cost
to the agents in the routing formulation. In addition, G2 trivially holds by the definition
of $\payment{i}{t}$. In order to satisfy G3, i.e., to show that
$\linkcost{i}{t}$ is convex over $\dem{i}{t}$, we show that
$\frac{\partial^2 \linkcost{i}{t}}{\partial {\dem{i}{t}}^2}\geq
0$. First we evaluate
\begin{equation}
  \label{eq:mo1}
  \begin{split}
    \frac{\partial \linkcost{i}{t}}{\partial \dem{i}{t}} &=
    \frac{\partial\left( \frac{\dem{i}{t}}{\demsum{}{t}}\totalFuelCost(\demsum{}{t})\right)}{\partial \dem{i}{t}} \\
    & = \ldots\\
    &= \frac{(\demsum{}{t}- \dem{i}{t})\avgCostPrice(\demsum{}{t}) +
      \dem{i}{t}\totalFuelCost'(\demsum{}{t}) }{\demsum{}{t}}.
    \\
  \end{split}
\end{equation}
Then we evaluate 
\begin{equation}
  \label{eq:mot2}
  \begin{split}
    \frac{\partial^2 \linkcost{i}{t}}{\partial {\dem{i}{t}}^2}
    &= \frac{1}{{\demsum{}{t}}^2}\left[ 2(\demsum{}{t}-
      \dem{i}{t})\left(\totalFuelCost'(\demsum{}{t}) -
        \frac{\totalFuelCost(\demsum{}{t})}{\demsum{}{t}}\right)\right.\\&\quad ~+
      \demsum{}{t}\dem{i}{t}\totalFuelCost''(\demsum{}{t}) \Big].
  \end{split}
\end{equation}
Given $\totalFuelCost(x)$ is convex, both
$\left(\totalFuelCost'(\demsum{}{t}) -
  \frac{\totalFuelCost(\demsum{}{t})}{\demsum{}{t}}\right)\geq 0 $ and
$\totalFuelCost''(\demsum{}{t})\geq0$ \cui{hold}; and therefore $\frac{\partial^2
  \linkcost{i}{t}}{\partial {\dem{i}{t}}^2}\geq 0$. Thus
$\linkcost{i}{t}$ is convex over $\dem{i}{t}$ and G3 holds. The above
also shows that $\linkcost{i}{t}$ is continuously differentiable over
$\dem{i}{t}$ and hence G4 holds. Finally, 
\cui{we assume that} G5 holds by construction.}
\onlyConf{
By the results in \cite{orda1993competitive} we can show that 
there exists an NEP
strategy for all agents with the cost function used here and the NEP solution exists for the \cui{proposed} 
noncooperative consumer load balancing game.
}
\onlyJournal{By the results in \cite{orda1993competitive} we know that if the cost
function satisfies assumption G2 and G3, and $\Dem{}{}$, 
is compact, 
there exists an NEP
strategy for all agents. Therefore, the NEP solution exists for the \cui{proposed} 
noncooperative consumer load balancing game.}

\onlyConf{\cui{On the other hand,} the cost function \onlyJournal{$\linkcost{i}{t}$}\onlyConf{$\payment{i}{t}$} does not satisfy the conditions for being a type-A function, \cui{defined in} \cite{orda1993competitive}.}\onlyJournal{\cui{On the other hand,} the cost function $\linkcost{i}{t}$ does not satisfy the assumption
A3; so it \cui{is disqualified} 
as a type-A function \cui{defined in}  \cite{orda1993competitive}.} \cui{Therefore,} 
the corresponding uniqueness result \cui{in \cite{orda1993competitive}} cannot be extended to our
formulation. \onlyJournal{\cui{Next,} we prove the uniqueness of the NEP solution by extending
the result in \cite{bhaskar2009equilibria}. 
The number of player \emph{types} in a
particular game refers to the number of different values for
$\demConstraint{i}$'s. Thus, a single type of players implies that all
players have the same value for $\demConstraint{i}$'s.  Next, we first
introduce some definitions and show that 
\cui{our} load balancing problem satisfies the conditions for
NEP uniqueness as described in \cite{bhaskar2009equilibria}. 
We now begin with some definitions from \cite{bhaskar2009equilibria}.



\begin{defi}
  The \emph{component-join} operation for two given graphs $G_1 = (V_1
  , E_1 )$ and $G_2 = (V_2 , E_2 )$ consists of \emph{merging} any two
  vertices $v_1 \in V_1$ and $v_2 \in V_2$ into a single vertex
  $v$. 
\end{defi}

\begin{defi}
\cuit{  Consider two nodes, named \emph{hubs}, then a \emph{hub-component}
  is a connected graph formed by connecting edges and nodes to the hubs such
  that all paths between the hubs remain vertex-disjoint.}

\end{defi}

  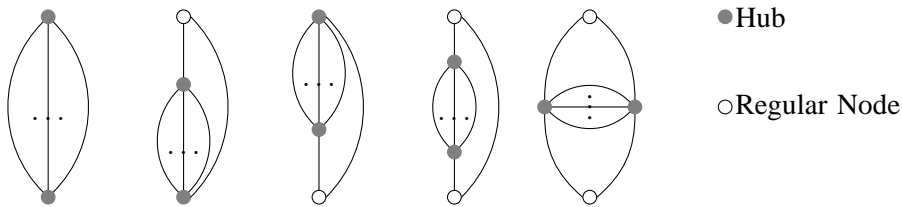
\begin{figure}[!h]
    \centering
    \tikzstyle{white} = [draw, circle, color=black, scale=0.5]
    \tikzstyle{black} = [draw, circle, color=gray, fill, scale=0.5]
\subfigure{
    \begin{tikzpicture}[>=stealth, scale=0.3]
      \node (n11) at (0,8) [black] {};
      \node (n12) at (0,0) [black] {};
      \draw (n11) to [bend left=45] (n12);
      \draw (n11) to [bend right=45] (n12);
      \draw (n11) -- (n12) node[midway, below] {$\ldots$};

      \node (n21) at (6,8) [white] {};
      \node (n22) at (6,5) [black] {};
      \node (n23) at (6,0) [black] {};
      \draw (n22) to [bend left=45] (n23);
      \draw (n22) to [bend right=45] (n23);
      \draw (n22) -- (n23) node[midway, below] {$\ldots$};
      \draw (n21) -- (n22);
      \draw (n21.east) to [bend left=45] (n23.east);

      \node (n31) at (12,8) [black] {};
      \node (n32) at (12,3) [black] {};
      \node (n33) at (12,0) [white] {};
      \draw (n31) to [bend left=45] (n32);
      \draw (n31) to [bend right=45] (n32);
      \draw (n31) -- (n32) node[midway, below] {$\ldots$};
      \draw (n32) -- (n33);
      \draw (n31.east) to [bend left=45] (n33.east);

      \node (n41) at (18,8) [white] {};
      \node (n42) at (18,6) [black] {};
      \node (n43) at (18,2) [black] {};
      \node (n44) at (18,0) [white] {};
      \draw (n42) to [bend left=45] (n43);
      \draw (n42) to [bend right=45] (n43);
      \draw (n42) -- (n43) node[midway, below] {$\ldots$};
      \draw (n41) -- (n42);
      \draw (n43) -- (n44);
      \draw (n41.east) to [bend left=45] (n44.east);

      \node (n51) at (24,8) [white] {};
      \node (n52) at (22,4) [black] {};
      \node (n53) at (26,4) [black] {};
      \node (n54) at (24,0) [white] {};
      \draw (n52) to [bend left=45] (n53);
      \draw (n52) to [bend right=45] (n53);
      \draw (n52) -- (n53) node[midway,yshift=3pt] {$\vdots$};
      \draw (n51.west) to [bend right=25] (n52.north);
      \draw (n51.east) to [bend left=25] (n53.north);
      \draw (n52.south) to [bend right=25] (n54.west);
      \draw (n53.south) to [bend left=25] (n54.east);
\onlyJournal{
\node (n61) at (30,8) [black] {};
\draw (n61) node[right] {\cuit{Hub}};
\node (n62) at (30,4) [white] {};
\draw (n62) node[right] {\cuit{Regular Node}};}
\end{tikzpicture}
}
\onlyConf{
\subfigure{
  \begin{tikzpicture}
    \node (n61) at (0,0) [black] {};
    \draw (n61) node[right] {\cuit{Hub\textcolor{white}{g}}};
    \node (n62) at (1,0) [white] {};
    \draw (n62) node[right] {\cuit{Regular Node}};
  \end{tikzpicture}
}
}
\caption{\cuit{Hub-components. These five basic units are used to construct nearly-parallel graphs.}
}
    \label{fig:npn}
  \end{figure}

\begin{defi}
  A \emph{generalized nearly-parallel graph} is any graph that can be
  constructed from \emph{hub-components} applying
  \emph{component-join} operations.
\end{defi}

\cui{The} two-node network with parallel links \cui{in}  Fig.~\ref{fig:twonode} is the
first of the five basic units (as drawn in Fig.~\ref{fig:npn}) of
nearly-parallel graphs \cite{bhaskar2009equilibria,
  richman2007topological}; and by definitions a hub-component
is also a generalized nearly-parallel graph.

\begin{defi}
  A cost function $f(x)$ is (strictly) \emph{semi-convex} if
  $xf(x)$ is (strictly) convex.
\end{defi}

For the load balancing game with average-cost based pricing and zero
revenue, the cost of consumer $i$ is given by (\ref{eq:cost1}) where
the \cui{price} 
$\avgCostPrice(\demsum{}{t})$ is given by
(\ref{eq:2}), with $\avgCostPrice(\demsum{}{t})$ a non-negative and
non-decreasing function. For the function
$\avgCostPrice(\demsum{}{t})$ to be strictly semi-convex,
$\demsum{}{t}\avgCostPrice(\demsum{}{t})$ needs to be strictly
convex. Since $\totalFuelCost(\demsum{}{t})
=\demsum{}{t}\avgCostPrice(\demsum{}{t})$ is the total cost of
electricity to the retailer, and we assume \cui{that}  the marginal cost price
$\totalFuelCost'(\demsum{}{})$ is a monotonically increasing function,
$\totalFuelCost(\demsum{}{})$ is strictly convex. }\onlyConf{In \cite{dummy2011} we show that our problem }\onlyJournal{Thus our problem }\cui{is equivalent to}  
an atomic flow game \onlyConf{\cite{bhaskar2009equilibria}} with
splittable flows and different \emph{player types} (i.e., each player
controls a different amount of total flow) over a generalized
nearly-parallel graph, \cui{which has} strictly semi-convex, non-negative, and
non-decreasing functions for cost per unit flow. By
the results of \cite{bhaskar2009equilibria}, \onlyConf{we can prove that} the NEP solution for the
load balancing game is unique\onlyConf{~\cite{dummy2011}}.

In the following, \cui{we discuss the properties for} 
the unique NEP \cui{solution for the proposed load balancing game.} 

\begin{lem}
\label{lem1}
With the average-cost based pricing and zero revenue, at the Nash
equilibrium the 
price of electricity faced by all consumers 
\cui{is the} same over all time slots.
\end{lem}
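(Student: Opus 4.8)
The plan is to read the Nash equilibrium off each consumer's best‑response (KKT) conditions and then combine them across consumers. First I would observe that since the per‑unit price $\avgCostPrice(\demsum{}{t})$ is strictly positive, no consumer gains by over‑supplying, so every demand constraint is tight at equilibrium. For consumer $i$ the marginal cost of its load in slot $t$ is $\AvgCosMC{i}{t}=\avgCostPrice(\demsum{}{t})+\dem{i}{t}\avgCostPrice'(\demsum{}{t})$ (exactly the derivative computed in (\ref{eq:mo1})), and stationarity forces $\AvgCosMC{i}{t}=\lambda_i$ on every slot where $\dem{i}{t}>0$ and $\AvgCosMC{i}{t}\geq\lambda_i$ where $\dem{i}{t}=0$, for some multiplier $\lambda_i$. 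I would also record the identity $\avgCostPrice(\demsum{}{t})+\demsum{}{t}\avgCostPrice'(\demsum{}{t})=\totalFuelCostMCP(\demsum{}{t})$, obtained by differentiating $\totalFuelCost(\demsum{}{t})=\demsum{}{t}\avgCostPrice(\demsum{}{t})$.

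In the interior case, where every consumer is active in every slot, the claim is immediate: summing $\AvgCosMC{i}{t}=\lambda_i$ over the $\numPlayer$ consumers and applying the identity gives $(\numPlayer-1)\avgCostPrice(\demsum{}{t})+\totalFuelCostMCP(\demsum{}{t})=\sum_i\lambda_i$, which is independent of $t$. Since $\avgCostPrice$ is nondecreasing and $\totalFuelCostMCP=\totalFuelCost'$ is strictly increasing (the standing strict‑convexity assumption on $\totalFuelCost$), the left side is a strictly increasing function of $\demsum{}{t}$; being constant in $t$, it forces $\demsum{}{t}$, and hence the price $\avgCostPrice(\demsum{}{t})$, to coincide across all slots.

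The hard part is the boundary behaviour, since at equilibrium some consumers may be idle in some slots and the clean sum above breaks. I would argue by contradiction: suppose the prices are not all equal, let $t_1$ be a cheapest slot with price $P_1$ and $t_2$ a slot with price $P_2>P_1$; monotonicity of $\avgCostPrice$ gives $\demsum{}{t_1}<\demsum{}{t_2}$. Two facts would then be established. First, the active sets nest, $S_{t_2}\subseteq S_{t_1}$: a consumer active in the dear slot $t_2$ has $\lambda_i\geq P_2>P_1$, so it cannot be idle in $t_1$, where idleness would pin its marginal cost at $P_1<\lambda_i$. Second, subtracting the two stationarity equations for each $i\in S_{t_2}$ gives $\dem{i}{t_1}\avgCostPrice'(\demsum{}{t_1})-\dem{i}{t_2}\avgCostPrice'(\demsum{}{t_2})=P_2-P_1$. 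Summing this over $i\in S_{t_2}$, using $\sum_{i\in S_{t_2}}\dem{i}{t_2}=\demsum{}{t_2}$, $\sum_{i\in S_{t_2}}\dem{i}{t_1}\leq\demsum{}{t_1}$, and the identity in the form $\demsum{}{t}\avgCostPrice'(\demsum{}{t})=\totalFuelCostMCP(\demsum{}{t})-\avgCostPrice(\demsum{}{t})$, yields $|S_{t_2}|\,(P_2-P_1)\leq\totalFuelCostMCP(\demsum{}{t_1})-\totalFuelCostMCP(\demsum{}{t_2})+(P_2-P_1)$.

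Finally, strict monotonicity of $\totalFuelCostMCP$ together with $\demsum{}{t_1}<\demsum{}{t_2}$ makes $\totalFuelCostMCP(\demsum{}{t_1})-\totalFuelCostMCP(\demsum{}{t_2})$ strictly negative, so $|S_{t_2}|<1$, i.e.\ $S_{t_2}=\emptyset$; but $P_2>P_1\geq\avgCostPrice(0)$ forces $\demsum{}{t_2}>0$ and hence at least one active consumer in $t_2$, a contradiction. Thus no two slots can carry different prices, which is the claim. I expect the only genuine obstacle to be this boundary bookkeeping — the nesting of the active sets and the summation bound that collapses the dear slot — while the interior computation and the algebraic identities are routine.
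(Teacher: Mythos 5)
Your proof is correct, and it takes a genuinely different route from the paper's. The paper disposes of this lemma with a two-line exchange argument: if $\avgCostPrice(\demsum{}{t_1})<\avgCostPrice(\demsum{}{t_2})$, then any consumer $j$ with $\dem{j}{t_2}>0$ shifts a small amount of demand from $t_2$ to $t_1$ and reduces its cost, contradicting equilibrium. Note that this compares \emph{prices}, whereas the deviation a consumer actually evaluates is a difference of \emph{marginal costs} $\avgCostPrice(\demsum{}{t})+\dem{j}{t}\avgCostPrice'(\demsum{}{t})$, which includes the self-inflicted price increase on the consumer's inframarginal units in the cheap slot; the paper leaves that term tacit. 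Your KKT treatment --- per-consumer multipliers $\lambda_i$, the identity $\avgCostPrice(\demsum{}{t})+\demsum{}{t}\avgCostPrice'(\demsum{}{t})=\totalFuelCostMCP(\demsum{}{t})$, the nesting $S_{t_2}\subseteq S_{t_1}$ of active sets, and the summed inequality forcing $|S_{t_2}|<1$ --- addresses exactly this point, and every step checks out: the nesting argument, the bound $\sum_{i\in S_{t_2}}\dem{i}{t_1}\leq\demsum{}{t_1}$ (using $\avgCostPrice'\geq 0$), and the endgame showing $\demsum{}{t_2}>0$ are all valid. The trade-off: the paper's exchange argument is short and nominally needs only monotonicity of $\avgCostPrice$, but it is heuristic at the margin; yours is airtight but leans on strictly increasing $\totalFuelCostMCP$ (strict convexity of $\totalFuelCost$) in the final step --- an assumption the paper adopts in the text preceding the lemma but does not state in the lemma itself, and with merely convex $\totalFuelCost$ your inequality degrades to $|S_{t_2}|\leq 1$, leaving a single-active-consumer case open. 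A pleasant side effect of your interior computation is that it establishes equal sum loads $\demsum{}{t}$ across slots directly, i.e., it proves Lemma~2 en route rather than deriving it from Lemma~1 as the paper does.
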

\onlyConf{The proof is provided in \cite{dummy2011}.}
\onlyJournal{\begin{proof}
  Consider two arbitrary time slots $t_1$ and $t_2$. At the Nash
  equilibrium the sum demands in the system \cui{over the two time slots} are either the same 
  or different. If the sum demands are equal over the
  two time slots, by (\ref{eq:2}), we know that the 
  price of
  electricity will be same \cui{over} 
  the two slots. If the sum demands are
  not equal, without losing generality, let us assume $\demsum{}{t_1} <
  \demsum{}{t_2}$ such that
  \begin{equation}
    \avgCostPrice(\demsum{}{t_1}) < \avgCostPrice(\demsum{}{t_2})
    \label{costcompar}
  \end{equation}
  holds. Then any consumer $j$ with $\dem{j}{t_2}> 0$ can reduce cost by
  reducing $\dem{j}{t_2}$ and increasing $\dem{j}{t_1}$ by the same
  small quantity.  This contradicts our assumption that the system
  is in equilibrium. Hence $\avgCostPrice(\demsum{}{t_1}) =
  \avgCostPrice(\demsum{}{t_2})$.
\end{proof}
}
\begin{lem}
  If $\totalFuelCost(\cdot)$ is strictly convex, at the Nash equilibrium,
  the sum of demands on the system, $\demsum{}{t}$, keeps the same
  across \cui{different time slots}.
\end{lem}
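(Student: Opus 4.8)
The plan is to combine Lemma~\ref{lem1} with the fact that strict convexity of $\totalFuelCost(\cdot)$ makes the average-cost price $\avgCostPrice(\cdot)$ strictly monotone, hence injective. Lemma~\ref{lem1} already guarantees that at any Nash equilibrium the per-unit price $\avgCostPrice(\demsum{}{t})$ takes a common value across all time slots. Therefore, to upgrade ``equal prices'' to ``equal sum demands'' it suffices to rule out two distinct load levels producing the same average-cost price, i.e.\ to show $\avgCostPrice$ is one-to-one.

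First I would prove that $\avgCostPrice(\demsum{}{})=\totalFuelCost(\demsum{}{})/\demsum{}{}$ is strictly increasing. Differentiating yields
\[
\avgCostPrice'(\demsum{}{}) = \frac{\totalFuelCost'(\demsum{}{})\,\demsum{}{} - \totalFuelCost(\demsum{}{})}{(\demsum{}{})^{2}} = \frac{\totalFuelCost'(\demsum{}{}) - \avgCostPrice(\demsum{}{})}{\demsum{}{}},
\]
so that the sign of $\avgCostPrice'$ coincides with the sign of $\totalFuelCost'(\demsum{}{}) - \avgCostPrice(\demsum{}{})$. This is exactly the quantity already analyzed in the convexity computation leading to (\ref{eq:mot2}), where convexity of $\totalFuelCost$ was used to conclude $\totalFuelCost'(\demsum{}{}) - \avgCostPrice(\demsum{}{})\geq 0$. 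Replacing convexity by \emph{strict} convexity turns this into a strict inequality, giving $\avgCostPrice'(\demsum{}{})>0$ for $\demsum{}{}>0$; hence $\avgCostPrice$ is strictly increasing and therefore injective.

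With injectivity established, the conclusion follows immediately. For any two slots $t_1$ and $t_2$, Lemma~\ref{lem1} gives $\avgCostPrice(\demsum{}{t_1}) = \avgCostPrice(\demsum{}{t_2})$, and strict monotonicity of $\avgCostPrice$ forces $\demsum{}{t_1} = \demsum{}{t_2}$. Since $t_1$ and $t_2$ were arbitrary, the sum demand $\demsum{}{t}$ is identical across all time slots at the Nash equilibrium.

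The step needing the most care is the strict inequality $\totalFuelCost'(\demsum{}{}) > \avgCostPrice(\demsum{}{})$, which is precisely where strict convexity (rather than mere convexity) is indispensable: plain convexity yields only ``$\geq$'', and for a strictly convex cost with a strictly positive fixed term $\totalFuelCost(0)>0$ the average cost need not be monotone at all. I would make this rigorous by setting $g(\demsum{}{})=\demsum{}{}\,\totalFuelCost'(\demsum{}{})-\totalFuelCost(\demsum{}{})$, observing that $g'(\demsum{}{})=\demsum{}{}\,\totalFuelCost''(\demsum{}{})>0$ under strict convexity, and invoking the normalization $\totalFuelCost(0)=0$ so that $g(0)=0$ and hence $g(\demsum{}{})>0$ for $\demsum{}{}>0$. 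The borderline case where some slot carries $\demsum{}{t}=0$ is handled by continuously extending $\avgCostPrice$ to the origin, after which the same injectivity argument applies.
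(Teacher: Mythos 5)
Your proposal is correct, and it shares the paper's overall skeleton: invoke Lemma~\ref{lem1} to get a common price $\avgCostPrice(\demsum{}{t})$ across slots, then use strict convexity of $\totalFuelCost(\cdot)$ to make $\avgCostPrice(\cdot)$ one-to-one. Where you genuinely diverge is in \emph{how} injectivity is established. The paper needs no calculus at all: writing $\demsum{}{t_1}$ as the convex combination $\frac{\demsum{}{t_1}}{\demsum{}{t_2}}\,\demsum{}{t_2} + \bigl(1-\frac{\demsum{}{t_1}}{\demsum{}{t_2}}\bigr)\cdot 0$ and applying the secant definition of strict convexity (together with $\totalFuelCost(0)=0$) gives $\totalFuelCost(\demsum{}{t_1}) < \frac{\demsum{}{t_1}}{\demsum{}{t_2}}\,\totalFuelCost(\demsum{}{t_2})$, i.e.\ $\avgCostPrice(\demsum{}{t_1})<\avgCostPrice(\demsum{}{t_2})$, an immediate contradiction with Lemma~\ref{lem1}. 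Your derivative route reaches the same strict monotonicity but demands more smoothness, and one step is stated too strongly: strict convexity does \emph{not} give $\totalFuelCost''(\demsum{}{})>0$ pointwise (the second derivative of a strictly convex function may vanish at isolated points, as for a quartic), so the claim $g'(\demsum{}{})=\demsum{}{}\,\totalFuelCost''(\demsum{}{})>0$ can fail; it is repairable --- $g(x)=x\totalFuelCost'(x)-\totalFuelCost(x)$ is strictly increasing on $x>0$ whenever $\totalFuelCost'$ is strictly increasing, by a mean-value argument, without ever touching $\totalFuelCost''$ --- but the paper's secant argument sidesteps the issue entirely and is the more elementary of the two. On the other side of the ledger, your write-up makes explicit two points the paper passes over silently: the normalization $\totalFuelCost(0)=0$, which the paper uses without comment when it substitutes $0$ for $\totalFuelCost(0)$ and without which the lemma is actually false (a strictly convex cost with a positive fixed term has a non-monotone average cost, exactly as you note); and the degenerate slot $\demsum{}{t}=0$, where $\avgCostPrice$ requires a continuous extension before the injectivity argument applies. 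Closing those two gaps is a real improvement over the paper's version, so with the $g'>0$ step weakened to strict monotonicity of $g$ your proof stands.
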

\onlyConf{The proof is provided in \cite{dummy2011}.}
\onlyJournal{\begin{proof} In order to prove this we will show that $\demsum{}{t_1}
  \neq \demsum{}{t_2}$ leads to contradiction. 
  At Nash equilibrium we have
  $\avgCostPrice(\demsum{}{t_1}) = \avgCostPrice(\demsum{}{t_2})$ from
  Lemma~\ref{lem1} for all possible $t_1$ and $t_2$. \cuit{ Thus, we obtain
  \begin{equation*}
    \begin{split}
      \avgCostPrice(\demsum{}{t_1}) &=
      \avgCostPrice(\demsum{}{t_2})\\
      \Leftrightarrow \totalFuelCost(\demsum{}{t_1})/\demsum{}{t_1} & = \totalFuelCost(\demsum{}{t_2})/\demsum{}{t_2}\\
      \Leftrightarrow \totalFuelCost(\demsum{}{t_1}) & = \totalFuelCost(\demsum{}{t_2})\frac{\demsum{}{t_1}}{\demsum{}{t_2}}.
    \end{split}
  \end{equation*}
  If
  $\totalFuelCost(\cdot)$ is strictly convex and
  $\demsum{}{t_1} \neq \demsum{}{t_2}$ (without loss of generality say $\demsum{}{t_1} < \demsum{}{t_2}$), by definition of strict convexity, we have
\begin{equation*}
  \begin{split}
     \totalFuelCost(\demsum{}{t_1}) &<\frac{\demsum{}{t_2} - \demsum{}{t_1}}{\demsum{}{t_2}}  \totalFuelCost(0) + \frac{\demsum{}{t_1}}{\demsum{}{t_2}}  \totalFuelCost(\demsum{}{t_2})\\
\Rightarrow \totalFuelCost(\demsum{}{t_2})\frac{\demsum{}{t_1}}{\demsum{}{t_2}} &<\frac{\demsum{}{t_2} - \demsum{}{t_1}}{\demsum{}{t_2}}  \times 0 + \frac{\demsum{}{t_1}}{\demsum{}{t_2}}  \totalFuelCost(\demsum{}{t_2})\\
\Rightarrow \totalFuelCost(\demsum{}{t_2})\frac{\demsum{}{t_1}}{\demsum{}{t_2}} &< \frac{\demsum{}{t_1}}{\demsum{}{t_2}}  \totalFuelCost(\demsum{}{t_2})\\
  \end{split}
\end{equation*}
which is a contradiction and hence $\demsum{}{t_1} = \demsum{}{t_2}$.
}
\end{proof}
}


\begin{lem}
  If $\totalFuelCost(\cdot)$ is strictly convex, at Nash equilibrium, each
  consumer will distribute its demands equally over the
  $\numTimeSlots$ time slots.
\end{lem}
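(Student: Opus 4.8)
The plan is to combine the two preceding lemmas with a first-order, best-response characterization of the equilibrium for each individual consumer. By the previous lemma, strict convexity of $\totalFuelCost(\cdot)$ forces the aggregate load $\demsum{}{t}$ to take a common value at any Nash equilibrium, say $\demsum{}{t}=s$ for all $t$. The first thing I would exploit is that $\avgCostPrice(\demsum{}{t})$, $\totalFuelCost(\demsum{}{t})$, and $\avgCostPrice'(\demsum{}{t})$ are then all constant across time slots, so the coupling between slots that normally complicates the per-consumer optimization disappears and each slot can be treated through the same scalar relation.

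Fixing a consumer $i$ and holding the other consumers' demands at their equilibrium values, its best response minimizes $\sum_t\payment{i}{t}=\sum_t\dem{i}{t}\avgCostPrice(\demsum{}{t})$ subject to $\sum_t\dem{i}{t}\ge\demConstraint{i}$ and $\dem{i}{t}\ge 0$. The per-slot marginal cost is $\partial\payment{i}{t}/\partial\dem{i}{t}=\avgCostPrice(\demsum{}{t})+\dem{i}{t}\avgCostPrice'(\demsum{}{t})$; evaluated at equilibrium, where $\demsum{}{t}=s$, this is the affine function $\avgCostPrice(s)+\dem{i}{t}\avgCostPrice'(s)$ of $\dem{i}{t}$ alone. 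The crucial point is that its slope $\avgCostPrice'(s)$ is strictly positive: strict convexity of $\totalFuelCost$ makes the average cost $\avgCostPrice(x)=\totalFuelCost(x)/x$ strictly increasing, since the inequality $\avgCostPrice'(x)=(x\totalFuelCost'(x)-\totalFuelCost(x))/x^2\ge 0$ already used to verify assumption G3 becomes strict for $x>0$. Hence, at equilibrium the marginal cost of consumer $i$ is a strictly increasing function of its own demand in that slot.

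I would then invoke the KKT conditions for this convex program: there is a multiplier $\lambda_i\ge 0$ such that the marginal cost equals $\lambda_i$ on every slot with $\dem{i}{t}>0$ and is at least $\lambda_i$ on every slot with $\dem{i}{t}=0$. Because the marginal cost is strictly increasing in $\dem{i}{t}$, the equality $\avgCostPrice(s)+\dem{i}{t}\avgCostPrice'(s)=\lambda_i$ has a unique solution, so all active slots carry the same demand; moreover an inactive slot would have marginal cost $\avgCostPrice(s)$, the smallest attainable value, which cannot meet the requirement of being at least $\lambda_i>\avgCostPrice(s)$ once any slot is active. Thus, as soon as $\demConstraint{i}>0$ forces at least one active slot, no slot is inactive, $\dem{i}{t}$ is independent of $t$, and since the marginal cost is positive the demand constraint binds, giving $\dem{i}{t}=\demConstraint{i}/\numTimeSlots$ for every $t$ (the case $\demConstraint{i}=0$ yielding $\dem{i}{t}\equiv 0$ trivially). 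The main obstacle is the equilibrium characterization itself---justifying the equalization of marginal costs across active slots and ruling out inactive ones---together with upgrading the G3 convexity inequality to a strict one; once the previous lemma is used to freeze $\demsum{}{t}$, the remaining monotonicity argument is routine.
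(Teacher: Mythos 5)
Your proof is correct, but it takes a genuinely different route from the paper's. The paper disposes of this lemma in two lines: having already invoked the atomic-splittable-flow results of \cite{bhaskar2009equilibria} to conclude that the NEP is \emph{unique}, it observes that the game is symmetric under permutations of time slots, so if $\dem{i}{t_1}\neq\dem{i}{t_2}$ for some consumer one could swap the entire slot vectors $\dem{}{t_1}$ and $\dem{}{t_2}$ and obtain a second, distinct NEP, contradicting uniqueness; the binding demand constraint then yields $\dem{i}{t}=\demConstraint{i}/\numTimeSlots$. You instead argue directly from the two preceding lemmas and the per-consumer KKT conditions: with $\demsum{}{t}=s$ frozen across slots by the previous lemma, stationarity at equilibrium gives $\avgCostPrice(s)+\dem{i}{t}\avgCostPrice'(s)=\lambda_i$ on active slots, and strict positivity of $\avgCostPrice'(s)$ pins down a common demand and rules out inactive slots. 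This is sound: the per-slot payment is convex in $\dem{i}{t}$ (the paper's G3 computation) and continuously differentiable (G4), with linear constraints, so KKT is both necessary and sufficient for a best response, and your evaluated marginal matches the paper's expression in (\ref{eq:mo1}). What your route buys is independence from the imported uniqueness theorem---indeed, since your argument pins down every NEP completely, it re-derives uniqueness as a byproduct rather than consuming it---at the cost of a longer, more computational argument; the paper's route is shorter but only works because the heavy uniqueness machinery is already in place. Two small points to tighten: (i) strict monotonicity of $\avgCostPrice$ needs $\totalFuelCost(0)=0$ in addition to strict convexity (otherwise $\totalFuelCost(x)/x$ can decrease near zero, e.g.\ $\totalFuelCost(x)=1+x^{2}$); the paper makes the same implicit assumption in its proof of the preceding lemma, so you are consistent, but it deserves a sentence. (ii) Describing $\avgCostPrice(s)+\dem{i}{t}\avgCostPrice'(s)$ as the marginal cost ``as an affine function of $\dem{i}{t}$ alone'' is loose, since off equilibrium $\demsum{}{t}$ moves with $\dem{i}{t}$; however, you only use this expression to compare the stationarity conditions across slots at the equilibrium point itself, where it is exact, so the logic stands.
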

\onlyConf{The proof is provided in \cite{dummy2011}.\\\\}
\onlyJournal{\begin{proof}
  As the Nash equilibrium is unique, by symmetry over all the time
  slots, for consumer $i$ we shall have \[\dem{i}{t_1} = \dem{i}{t_2},
  \quad \forall t_1, t_2,\] as otherwise we could swap the demand
  vectors $\dem{}{t_1}$ and $\dem{}{t_2}$ in time slots $t_1$ and $t_2$
  without altering the Nash equilibrium conditions and get another
  distinct NEP, thus contradicting \cui{the} uniqueness.  
Thus, with
  \cuit{$\sum_t\dem{i}{t}\geq\demConstraint{i}$ and the fact that the consumer is trying to minimize their cost 
such that
    $\sum_t\dem{i}{t}=\demConstraint{i}$ holds at equilibrium,} we have the solution
  $\dem{i}{t}=\demConstraint{i}/\numTimeSlots$ for all consumers $i$ and
  time slots $t$.
\end{proof}
}
{\bf \noindent Remark:} Under the average-cost based pricing scheme with zero revenue, if one
particular consumer increases its \cui{total} demand of electricity, the 
price $\avgCostPrice(\cdot)$ 
increases, which in turn increases the payments
for all other consumers as well. Theoretically one consumer may cause
indefinite increases in the payments of all others; and in this sense
this scheme does not protect the group from reckless action of some
consumer(s). This issue will be addressed by our second pricing scheme
as we will show in Section \ref{sec:new-pricing-scheme}.


\subsection{Constant-Rate Revenue Model}
\label{sec:const-rate-reve-mod1}

In this case, the rate of revenue generation for each consumer at each time slot is
taken as a non-negative constant $\rateofrevenue{i}{t}$.
Thus,
\onlyJournal{\[\revenue{i}{t} = \rateofrevenue{i}{t}\times \dem{i}{t}.\]}\onlyConf{$\revenue{i}{t} = \rateofrevenue{i}{t}\times \dem{i}{t}$.} 
The consumer 
\cui{load balancing}
problem for each consumer $i$ is given by
the following optimization problem:
      \begin{equation*}
      \begin{split}
        \text{maximize}\quad & \payoff{i}{}(\dem{i}{})=
        \sum_t\left(\revenue{i}{t}-\payment{i}{t}\right)\\
        \text{subject to}\quad 
        &\revenue{i}{t} = \rateofrevenue{i}{t}\dem{i}{t},\quad\forall t,\\
        &\payment{i}{t} = \dem{i}{t}\avgCostPrice(\demsum{}{t}),\quad\forall t,\\
        & \sum_t\dem{i}{t}
        \geq\demConstraint{i},\\
        & \demsum{}{t}=\sum_j\dem{j}{t},\quad\forall t,
\\
        &0\leq \dem{i}{t},\quad\forall  t.\\
      \end{split}
    \end{equation*}


\cui{We assume that} $\demConstraint{i}=0,~\forall i,$ and the 
    rate of
    revenue \cui{is larger} 
    than the 
    price of electricity 
\cui{such that we do not end up with} any negative payoff or 
    the trivial solution
    $\dem{i}{t}=0,~\forall i,t$.  

    Here again, if the sum demand in a given time slot $t$ exceeds the
    \cui{retailer's} capacity threshold $\maxWholeSaleCapacity$,
     the consumers will face an infinite price for their
    consumption. 
    This implies that, at
    Nash equilibrium \cui{the} sum demand $\demsum{}{t}$ will never exceed the
    capacity threshold $\maxWholeSaleCapacity$, as \onlyJournal{G5 holds}\onlyConf{we assume that sum load demand over all time slots is greater that sum load available}. 
    This
    again allows for the redundant constraint
    $\dem{i}{t}\leq\maxWholeSaleCapacity, ~\forall i, t$, as
    $\dem{i}{t}\leq\sum_i\dem{i}{t}=\demsum{}{t}\leq\maxWholeSaleCapacity$,
    which in turn makes the feasible region for $\dem{}{}$,
    $\Dem{}{}$, finite and hence compact. 

\onlyConf{The proof for the existence of NEP for this game under the
given assumptions is provided in \cite{dummy2011}.}
\onlyJournalCand{
    We briefly show that under these assumptions there exists an NEP
    for this game. \cui{In particular}, the effective cost function for the corresponding
    routing game is given as
\begin{equation}
\linkcost{i}{t}=\payment{i}{t} - \rateofrevenue{i}{t}\times
\dem{i}{t}\label{eq:ng13}.
\end{equation}
As $\payment{i}{t}$ is continuous in $\dem{}{t}$, $\linkcost{i}{t}$ is
continuous in $\dem{}{t}$ as well and satisfies assumption G2.  We
have already shown that $\payment{i}{t}$ under the average-cost based
pricing scheme is convex in $\dem{i}{t}$ through (\ref{eq:mot2}). The
function $- \rateofrevenue{i}{t} \dem{i}{t}$ is linear and hence
convex in $\dem{i}{t}$. Thus, by the property that the summation of
two convex functions is convex, $\linkcost{i}{t}$ from (\ref{eq:ng13})
is convex in $\dem{i}{t}$ and hence satisfies assumption
G3. Following the proof in \cite{orda1993competitive}, we consider the
point-to-set mapping $\dem{}{}\in \Dem{}{}\rightarrow
\Gamma(\dem{}{})\subset \Dem{}{}$ defined as
\begin{equation}
\Gamma(\dem{}{})
=\{\demt{}{}\in\Dem{}{}:\demt{i}{}\in\arg\min_{\demb{i}{}\in\Dem{i}{}}\linkcost{i}{}(\dem{1}{},\ldots,
\demb{i}{}, \ldots, \dem{\numPlayer}{})\} ,\label{eq:neweq1}
\end{equation}
where $\Gamma$ is an
upper semicontinuous mapping (by the continuity assumption G2) that
maps each point of the convex compact set $\Dem{}{}$ into a closed (by
G2) convex (by G3) subset of $\Dem{}{}$. By the Kakutani Fixed Point
Theorem \cite{kakutani1941generalization}, there exists a fixed point
$\dem{}{}\in\Gamma(\dem{}{})$ and such a point is an NEP \cite{nash1951non}.



}
\begin{lem}\label{p12}
  At the Nash equilibrium, the consumer(s) with the highest revenue
  rate ($\rateofrevenue{i}{t}$) within the time slot, may be the only
  one(s) buying the power in that time slot.
\end{lem}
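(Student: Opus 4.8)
The plan is to exploit the fact that, with $\demConstraint{i}=0$, the payoff $\payoff{i}{}=\sum_t(\rateofrevenue{i}{t}\dem{i}{t}-\payment{i}{t})$ carries no binding coupling constraint across time slots, so each consumer's best response decouples slot-by-slot and I can analyze a single slot $t$ in isolation. Within slot $t$, holding the other consumers' demands fixed, consumer $i$ maximizes $\rateofrevenue{i}{t}\dem{i}{t}-\dem{i}{t}\avgCostPrice(\demsum{}{t})$ over $\dem{i}{t}\geq 0$. This objective is concave in $\dem{i}{t}$, since $\payment{i}{t}$ was shown convex in $\dem{i}{t}$ through (\ref{eq:mot2}) and the revenue term is linear; hence the KKT stationarity conditions are both necessary and sufficient for characterizing the best response, and therefore the Nash equilibrium.

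Next I would write the marginal condition using the marginal payment already computed in (\ref{eq:mo1}), namely $\partial\payment{i}{t}/\partial\dem{i}{t}=\avgCostPrice(\demsum{}{t})+\dem{i}{t}(\totalFuelCost'(\demsum{}{t})-\avgCostPrice(\demsum{}{t}))/\demsum{}{t}$, where the bracketed factor $\totalFuelCost'(\demsum{}{t})-\avgCostPrice(\demsum{}{t})\geq 0$ by convexity of $\totalFuelCost$. The KKT conditions then split cleanly: a buyer ($\dem{i}{t}>0$) satisfies $\rateofrevenue{i}{t}=\avgCostPrice(\demsum{}{t})+\dem{i}{t}(\totalFuelCost'(\demsum{}{t})-\avgCostPrice(\demsum{}{t}))/\demsum{}{t}\geq\avgCostPrice(\demsum{}{t})$, whereas a non-buyer ($\dem{i}{t}=0$) satisfies $\rateofrevenue{i}{t}\leq\avgCostPrice(\demsum{}{t})$. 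Thus the equilibrium price $\avgCostPrice(\demsum{}{t})$ functions as a single threshold separating buyers from non-buyers in slot $t$.

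The central step is a contradiction argument showing the buyer set is \emph{upward closed} in the revenue rate. Suppose consumer $j$ buys in slot $t$ while a consumer $i$ with strictly higher rate $\rateofrevenue{i}{t}>\rateofrevenue{j}{t}$ abstains. Since $j$ buys, the buyer condition gives $\rateofrevenue{j}{t}\geq\avgCostPrice(\demsum{}{t})$; since $i$ abstains, the non-buyer condition gives $\rateofrevenue{i}{t}\leq\avgCostPrice(\demsum{}{t})$. Chaining these yields $\rateofrevenue{i}{t}\leq\avgCostPrice(\demsum{}{t})\leq\rateofrevenue{j}{t}<\rateofrevenue{i}{t}$, a contradiction. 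Hence no consumer can purchase while a strictly-higher-rate consumer abstains, so whenever any power is bought in slot $t$ the consumer(s) attaining the maximum rate must be among the buyers.

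Finally, for the ``may be the only ones'' conclusion I would exhibit the regime in which the threshold actually excludes everyone else. Let consumer $i$ be the unique top-rate consumer and let $\dem{i}{t}$ solve its stand-alone stationarity equation $\rateofrevenue{i}{t}=\avgCostPrice(\demsum{}{t})+\dem{i}{t}(\totalFuelCost'(\demsum{}{t})-\avgCostPrice(\demsum{}{t}))/\demsum{}{t}$ with all others set to zero; because $\avgCostPrice(\cdot)$ is non-decreasing, a sufficiently large $\rateofrevenue{i}{t}$ drives the resulting price $\avgCostPrice(\demsum{}{t})$ above every other rate, so each $\rateofrevenue{j}{t}\leq\avgCostPrice(\demsum{}{t})$ and $\dem{j}{t}=0$ is forced, giving a consistent equilibrium in which only consumer $i$ buys. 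The delicate point to handle carefully is the degenerate case $\totalFuelCost'(\demsum{}{t})=\avgCostPrice(\demsum{}{t})$, where the marginal payment collapses to $\avgCostPrice(\demsum{}{t})$ for \emph{every} buyer; there the buyer condition becomes the exact equality $\rateofrevenue{i}{t}=\avgCostPrice(\demsum{}{t})$ for all buyers, so the active set can consist only of consumers tied at the top rate, which is precisely the scenario the lemma asserts. The main obstacle is thus not the threshold characterization itself but making the hedged ``may'' precise by pinning down these boundary regimes and the treatment of ties in $\rateofrevenue{i}{t}$.
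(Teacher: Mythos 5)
Your proposal is internally sound but it travels a genuinely different route from the paper, and the difference is not cosmetic. The paper's proof never uses the full marginal payment: in its displayed chain it sets $\frac{\partial \payment{i}{t}}{\partial \dem{i}{t}} = \totalFuelCost(\demsum{}{t})/\demsum{}{t} = \avgCostPrice(\demsum{}{t})$, i.e., it treats every consumer as facing the average price, concludes $\rateofrevenue{i}{t}\leq \avgCostPrice(\demsum{}{t})$ for \emph{all} consumers, and then removes everyone with strict inequality by the negative-payoff observation (a buyer pays exactly $\avgCostPrice(\demsum{}{t})$ per unit); buyers are thus forced to satisfy $\rateofrevenue{i}{t}=\avgCostPrice(\demsum{}{t})\geq \rateofrevenue{k}{t}$ for all $k$, so only the argmax set can buy. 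You instead run KKT with the exact marginal from (\ref{eq:mo1}), which retains the extra non-negative term $\dem{i}{t}\left(\totalFuelCost'(\demsum{}{t})-\avgCostPrice(\demsum{}{t})\right)/\demsum{}{t}$. That yields buyers with $\rateofrevenue{i}{t}\geq \avgCostPrice(\demsum{}{t})$ and non-buyers with $\rateofrevenue{i}{t}\leq \avgCostPrice(\demsum{}{t})$, which proves only \emph{upward-closedness} of the buyer set --- the maximum-rate consumers are \emph{among} the buyers --- whereas the paper's proof (and its downstream use of Lemma~\ref{p12} for sub-game uniqueness) needs the reverse inclusion: that \emph{only} maximum-rate consumers buy.

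Be aware that your correct KKT analysis cannot be pushed to the paper's stronger conclusion, because that conclusion fails for strictly convex $\totalFuelCost$: take $\totalFuelCost(\demandSymbol)=\demandSymbol^2$ (so $\avgCostPrice(\demsum{}{t})=\demsum{}{t}$) and two consumers with $\rateofrevenue{1}{t}=10$, $\rateofrevenue{2}{t}=8$; the first-order conditions $\rateofrevenue{i}{t}=\demsum{}{t}+\dem{i}{t}$ give $\demsum{}{t}=6$, $\dem{1}{t}=4$, $\dem{2}{t}=2$, a Nash equilibrium in which a non-maximal consumer buys a positive amount. The paper's identity $\frac{\partial \payment{i}{t}}{\partial \dem{i}{t}}=\avgCostPrice(\demsum{}{t})$ holds only at $\dem{i}{t}=0$ or where $\totalFuelCost'=\avgCostPrice$, which is exactly the degenerate regime you isolate at the end; so your reading of the hedged ``may'' --- exclusivity holds in special regimes (locally linear cost, or a top rate large enough to price everyone else out), not universally --- is the defensible one. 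Two fixable weaknesses remain in your step 4: the single-buyer stationarity condition should be $\rateofrevenue{i}{t}=\totalFuelCost'(\dem{i}{t})$ (when only $i$ buys, its payment is the total cost, so the relevant marginal is $\totalFuelCost'$, not $\avgCostPrice$ as the paper's remark after the lemma states), and the existence of the claimed single-buyer equilibrium for ``sufficiently large'' top rate is asserted rather than verified --- you should check that the induced $\avgCostPrice(\demsum{}{t})$ at that fixed point indeed dominates every other consumer's rate.
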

\onlyConf{The proof is provided in \cite{dummy2011}. }\onlyJournal{\begin{proof}
  For a given time slot $t$ consumer $i$ has an incentive to
  increase its demand $\dem{i}{t}$ as long as the payoff can increase,
  i.e., as long as \onlyJournal{\[\frac{\partial \pi_i}{\partial \dem{i}{t}}> 0.\]}\onlyConf{$\partial \pi_i/\partial \dem{i}{t}> 0$.} 
  Therefore at the equilibrium the following holds for all consumers.
\begin{equation*}
  \begin{split}
    \frac{\partial \payoff{i}{}}{\partial \dem{i}{t}} &\leq 0 \\
    \Rightarrow \frac{\partial \left(\revenue{i}{t} - \payment{i}{t} \right) }{\partial \dem{i}{t}} &\leq 0\\
    \Rightarrow\frac{\partial \revenue{i}{t} }{\partial
      \dem{i}{t}} - \frac{\partial \payment{i}{t}  }{\partial \dem{i}{t}} &\leq 0\\
    \Rightarrow \rateofrevenue{i}{t} &\leq \frac{\partial \payment{i}{t}  }{\partial \dem{i}{t}} = \frac{\totalFuelCost(\demsum{}{t})}{\demsum{}{t}}=A(\demsum{}{t})\\
  \end{split}
\end{equation*}
For the consumers with a strict inequality $\rateofrevenue{i}{t} <
A(\demsum{}{t})$, the rate of revenue is less than the price per unit
of electricity at time $t$; hence the revenue is less than the cost,
$\revenue{i}{t} < \payment{i}{t}$, such that buying electricity will
incur them a negative payoff and hence all such consumers, with
$\rateofrevenue{i}{t} < \avgCostPrice(\demsum{}{t})$, will not buy any
power in that time slot, i.e., $\dem{i}{t}=0$.  Therefore only the set
of consumers $\{\arg\max_k \rateofrevenue{k}{t}\}$, i.e., the
consumers who enjoy the maximum rate of revenue may be able to
purchase electricity.
\end{proof}}Thus if consumer $i$ has the maximum rate of revenue, either it is the
only consumer buying non-zero power $\dem{i}{t}$ such that
$\rateofrevenue{i}{t}=\avgCostPrice(\demsum{i}{t})$ or
$\rateofrevenue{i}{t}<\totalFuelCost'(0)$ and hence $\dem{i}{t}=0$
in that time slot, which leads to a unique Nash equilibrium for the
sub-game. If in a given time slot multiple consumers experience the
same maximum rate of revenue, the sub-game will turn into a Nash
Demand Game \cite{nash1953two} between the set of consumers given by
$\{\arg\max_k \rateofrevenue{k}{t}\}$, which is well known to admit
multiple Nash equilibriums. Thus the overall noncooperative game
has a unique Nash equilibrium if and only if, in each time slot, at
most one consumer experiences the maximum rate of revenue.

\section{Nash Equilibrium with Increasing-Block Pricing}
\label{sec:new-pricing-scheme}

In this section we study the load balancing game with the
time-variant increasing-block pricing scheme.
Under this scheme consumer $i$ pays $\payment{i}{t}$ for $\dem{i}{t}$
units of electricity, which is given by
(\ref{cost2}) \cui{with} $\IBP(\demandSymbol, \dem{}{t})$  the marginal cost
function posed to the consumer
. Thus, \cui{as defined before, we have}

\[\IBP(\demandSymbol, \dem{}{t}) =  \totalFuelCostMCP\left(\sum_j \min{(\demandSymbol,\dem{j}{t})}\right)
 .\]
 As an example, if the \cui{demands from different consumers} at time slot $t$ are
 identical, i.e., if $\dem{i}{t}=\dem{j}{t},~\forall i,j$, we
 have, \[\IBP(\demandSymbol,\dem{}{t})=\totalFuelCostMCP(\numPlayer
 \demandSymbol). \]




 \subsection{Zero-Revenue Model}
 \label{sec:zero-reve-model}

In this case the payment by consumer $i$ is given by (\ref{cost2})
\[\payment{i}{t}=\int_0^{\dem{i}{t}}\IBP(\demandSymbol, \dem{}{t})d\demandSymbol.\]
The consumer \cui{load balancing} 
problem for each consumer $i$ is given by
the following optimization problem:
      \begin{equation*}
      \begin{split}
        \text{maximize}\quad & \payoff{i}{}(\dem{i}{})=
        -\sum_t\payment{i}{t}\\
        \text{subject to}\quad 
        &\payment{i}{t} = \int_0^{\dem{i}{t}}\IBP(\demandSymbol, \dem{}{t})d\demandSymbol,\quad\forall t,\\
        & \sum_t\dem{i}{t}
        \geq\demConstraint{i},
\\
        &0\leq \dem{i}{t},\quad\forall  t.\\
      \end{split}
    \end{equation*}
    If the sum demand $\demsum{}{t}$ in a time slot $t$ exceeds
    $\maxWholeSaleCapacity$, the price of electricity for the consumer
    with the highest demand (\cui{indexed by} $\hat{j}$) becomes infinite. As we
    retain the assumption \onlyJournal{G5}\onlyConf{that sum load demand over all time slots is greater that sum load available}, 
    consumer $\hat{j}$ can
    rearrange its demand \cui{vector such that} 
    either \cui{the} sum demand
    becomes within \cui{the} capacity threshold or 
    consumer $\hat{j}$ is no
    longer the highest demand consumer \cui{(then the new customer with the highest demand performs the same routine until the sum demand is under the threshold)}. This implies that, at the Nash
    equilibrium point 
\cui{we have} 
    $\demsum{}{t}\leq\maxWholeSaleCapacity$. \cui{Similarly, we now have the} 
    redundant constraint
    $\dem{i}{t}\leq\maxWholeSaleCapacity,~\forall~ i,~ t$
    ,
    which in turn makes the feasible region 
    $\Dem{}{}$ finite and hence compact. 

\onlyConf{The proof for the existence of NEP for this game under the
given assumptions is provided in \cite{dummy2011}.}
\onlyJournal{
    As $\payment{i}{t}$ is continuous in $\dem{}{t}$, 
    in the
    corresponding routing game \cui{we have that}
\begin{equation}
\linkcost{i}{t}=\payment{i}{t}\label{eq:ng14}
\end{equation}
is continuous in $\dem{}{t}$ and satisfies assumption G2.
In addition, $\payment{i}{t}$ is convex in $\dem{i}{t}$ as its derivative, the
marginal cost function $\IBP(\demandSymbol, \dem{}{t})$, is
non-decreasing. 
Thus, 
$\linkcost{i}{t}$ 
is convex in $\dem{i}{t}$ and hence satisfies  assumption
G3. Following the proof in \cite{orda1993competitive}, we consider
the point-to-set mapping $\dem{}{}\in
\Dem{}{}\rightarrow \Gamma(\dem{}{})\subset
\Dem{}{}$ defined \cui{the same as in (\ref{eq:neweq1})}. 
By the Kakutani Fixed Point
Theorem \cite{kakutani1941generalization}, there exists a fixed point
$\dem{}{}\in\Gamma(\dem{}{})$ and such a point is
 an NEP.
}
\onlyJournal{

 }When each consumer tries to minimize its total cost while satisfying
 its minimum daily energy requirement $\demConstraint{i}$, we have the
 following result.

\begin{lem}\label{p21}
  If $\totalFuelCost(\cdot)$ is strictly convex, the Nash equilibrium is
  unique and each consumer distributes its demand uniformly over all
  time slots. 
\end{lem}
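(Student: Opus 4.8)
The plan is to show that \emph{every} Nash equilibrium of this game has the uniform form $\dem{i}{t}=\demConstraint{i}/\numTimeSlots$; since that formula determines a single demand configuration, proving it yields the uniform load-balancing property and uniqueness simultaneously (existence having already been secured by the Kakutani fixed-point argument above). I would \emph{not} try to invoke the atomic-splittable-flow uniqueness result of \cite{bhaskar2009equilibria} here, because under increasing-block pricing the per-unit charge is discriminatory: it depends on the whole vector $\dem{}{t}$ rather than only on the link load $\demsum{}{t}$, so the semi-convex cost-per-unit-flow hypothesis does not apply. Instead I would derive the equilibrium first-order conditions explicitly and then extract the uniform structure by an inductive peeling argument over the consumers.

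First I would compute the marginal cost. Differentiating $\payment{i}{t}=\int_0^{\dem{i}{t}}\IBP(\demandSymbol,\dem{}{t})d\demandSymbol$ with respect to $\dem{i}{t}$ and noting that, for $\demandSymbol<\dem{i}{t}$, the integrand's dependence on $\dem{i}{t}$ enters only through $\min(\demandSymbol,\dem{i}{t})=\demandSymbol$ and hence vanishes, Leibniz's rule gives $\partial\payment{i}{t}/\partial\dem{i}{t}=\IBP(\dem{i}{t},\dem{}{t})=\totalFuelCostMCP(\sum_j\min(\dem{i}{t},\dem{j}{t}))$. Writing the KKT conditions for each consumer's convex cost-minimization (one coupling multiplier $\lambda_i$ for $\sum_t\dem{i}{t}\geq\demConstraint{i}$, plus multipliers for $\dem{i}{t}\geq0$), the marginal cost equals the common value $\lambda_i$ on every slot with $\dem{i}{t}>0$ and is $\geq\lambda_i$ on slots with $\dem{i}{t}=0$, where it equals $\totalFuelCostMCP(0)$. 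Since strict convexity of $\totalFuelCost(\cdot)$ makes $\totalFuelCostMCP$ strictly increasing, a zero-demand slot would force $\totalFuelCostMCP(0)\geq\lambda_i=\totalFuelCostMCP(\sigma_i)$, hence $\sigma_i=0$, where $\sigma_i:=\sum_j\min(\dem{i}{t},\dem{j}{t})$ on a positive-demand slot; but there $\sigma_i\geq\dem{i}{t}>0$, a contradiction once $\demConstraint{i}>0$. Thus $\dem{i}{t}>0$ for all $t$, and strict monotonicity upgrades the equalized condition to $\sum_j\min(\dem{i}{t},\dem{j}{t})=\sigma_i$, a constant independent of $t$, for each consumer $i$.

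The core step is then a purely algebraic claim: if the nonnegative matrix $(\dem{i}{t})$ satisfies $\sum_j\min(\dem{i}{t},\dem{j}{t})=\sigma_i$ for all $t$, every row is constant in $t$. I would prove this by induction on $\numPlayer$. Let $i^\star$ index the smallest $\sigma_i$ and set $m_t=\min_i\dem{i}{t}$. Bounding each term $\min(\dem{i^\star}{t},\dem{j}{t})$ below by $m_t$ gives $\sigma_{i^\star}\geq\numPlayer m_t$; conversely the consumer attaining the minimum in slot $t$ has $\sigma$-value exactly $\numPlayer m_t\geq\sigma_{i^\star}$. These force $m_t=\sigma_{i^\star}/\numPlayer$, constant across slots, and equality in the first bound makes every term $\min(\dem{i^\star}{t},\dem{j}{t})$ equal to $m_t$, so in particular $\dem{i^\star}{t}=m_t=:m$ in every slot. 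Since $\dem{i}{t}\geq m$ for all $i$ and $t$, the term $\min(\dem{i}{t},\dem{i^\star}{t})$ equals $m$ for every $i$; subtracting it off and shifting the remaining demands down by $m$ reduces the system to the identical structure over the $\numPlayer-1$ consumers $j\neq i^\star$. The induction hypothesis then makes each remaining row constant in $t$, and the base case $\numPlayer=1$ is immediate.

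Finally, because $\totalFuelCostMCP$ is strictly increasing and nonnegative, the marginal cost of the last purchased unit is strictly positive, so the demand constraint binds, $\sum_t\dem{i}{t}=\demConstraint{i}$; combined with $\dem{i}{t}$ being constant in $t$ this gives $\dem{i}{t}=\demConstraint{i}/\numTimeSlots$ for every $i$ and $t$. As this configuration is unique, the Nash equilibrium is unique and uniform. I expect the inductive peeling argument of the third paragraph to be the main obstacle: the ordering of consumers by demand can differ from slot to slot, so the $\min$ terms do not decouple slot-by-slot, and the argument must first locate the globally smallest-$\sigma$ consumer and prove its demand is slot-invariant before the remaining system can be reduced and the induction closed.
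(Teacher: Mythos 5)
Your proof is correct, and it pivots on exactly the same identity as the paper's proof of Lemma~\ref{p21} --- that at equilibrium $\sigma_i=\sum_j\min(\dem{i}{t},\dem{j}{t})$ must be independent of $t$ for every consumer (the paper's (\ref{eq:lem4}), obtained from (\ref{eq:prf42}) via strict monotonicity of $\totalFuelCostMCP$) --- but you reach and exploit that identity by a genuinely different route. The paper derives the equalized-marginal condition (\ref{eq:prf42}) from an informal exchange argument (``otherwise consumer $i$ can vary $\dem{i}{t_1}$ and $\dem{i}{t_2}$,'' as in Lemma~\ref{lem1}); you instead write KKT conditions with an explicit multiplier $\lambda_i$, which is more careful at the boundary: the paper's equality need not hold verbatim on slots where $\dem{i}{t}=0$, and your observation that $\IBP(0,\dem{}{t})=\totalFuelCostMCP(0)$ rules such slots out for $\demConstraint{i}>0$, quietly closing that gap. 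For the combinatorial core --- constancy of the $\sigma_i$ in $t$ forces every row of $(\dem{i}{t})$ to be constant --- the paper argues by contradiction from above, taking the disagreement set $\mathcal{P}$ and the consumers $a,b$ of maximal demand in two slots and playing the inequalities (\ref{eq:8}) and (\ref{eq:9}) against each other; you argue by induction on $\numPlayer$, peeling from below: the two bounds $\numPlayer m_t\geq\sigma_{i^\star}\geq \numPlayer m_t$ pin down the slotwise minimum, and the shifted system $y_{j,t}=\dem{j}{t}-m$ inherits the same structure with $\sigma_i'=\sigma_i-\numPlayer m\geq 0$, which checks out in all equality cases. You also explicitly justify that the demand constraint binds (the marginal price of the last unit is strictly positive), a step the paper only asserts. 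What the paper's two-slot extremal argument buys is brevity; what yours buys is a fully rigorous handling of zero-demand slots and a self-contained, reusable structural lemma about the map $\dem{}{t}\mapsto(\sigma_1,\ldots,\sigma_{\numPlayer})$. Both routes conclude identically: every NEP equals $\dem{i}{t}=\demConstraint{i}/\numTimeSlots$, so uniqueness follows once existence is granted by the Kakutani argument, and your anticipated ``main obstacle'' (slot-varying orderings of consumers) is indeed precisely what your minimal-consumer peeling, like the paper's maximal-consumer comparison, is designed to defeat.
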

\onlyConf{The proof is provided in \cite{dummy2011}.\\\\}
\onlyJournal{
\begin{proof}
  For the equilibrium conditions to be satisfied, \[
  \IBP(\dem{i}{t_1}, \dem{}{t_1}) = \IBP(\dem{i}{t_2},
  \dem{}{t_2}),\quad \forall i,t_1, t_2,\] should hold; 
  otherwise consumer $i$ can increase payoff by varying $\dem{i}{t_1}$
  and $\dem{i}{t_2}$, \cui{in a similar argument to that for Lemma~\ref{lem1}}. This condition can be rewritten after expanding
  $\IBP(\cdot)$ as
  \begin{equation}
    \begin{split}
      \totalFuelCostMCP\left(\sum_j
        \min{(\dem{i}{t_1},\dem{j}{t_1})}\right) &=
      \totalFuelCostMCP\left(\sum_j
        \min{(\dem{i}{t_2},\dem{j}{t_2})}\right) ,\quad \forall
      i,t_1, t_2.
    \end{split}
    \label{eq:prf42}
  \end{equation}
  Given \cui{that} $\totalFuelCost(\cdot)$ is strictly convex, we have
  $\totalFuelCostMCP(\cdot) = \totalFuelCost'(\cdot)$ monotonically increasing, 
  which gives
  \begin{equation}
\totalFuelCostMCP(\demsum[z]{}{1}) =\totalFuelCostMCP(\demsum[z]{}{2}) \Leftrightarrow \demsum[z]{}{1}=\demsum[z]{}{2}\label{eq:3}.
\end{equation}
Therefore, (\ref{eq:prf42}) implies
  \begin{equation}
    \begin{split}
      \sum_j\min{(\dem{i}{t_1},\dem{j}{t_1})} &=
      \sum_j\min{(\dem{i}{t_2},\dem{j}{t_2})},\quad\forall i, t_1, t_2.\\
    \end{split}
    \label{eq:lem4}
  \end{equation}
  Now 
  assume that there exists an NEP $\dem{}{}$ with demand
  vectors $\dem{}{t_1}\neq \dem{}{t_2}$. Let $\mathcal{P}$ represent the subset of consumers with
  unequal demands in time slots $t_1$ and $t_2$, \cui{i.e.,}
  \[\mathcal{P}=\{k| \dem{k}{t_1}\neq
  \dem{k}{t_2}, k\in \{1,2,\ldots,\numPlayer\}\} .\] Then let $a$
  represent the consumer from subset $\mathcal{P}$ with the highest value of
  demand in time slot $t_1$, \cui{i.e.,}
  \begin{equation}
    \label{eq:4}
    a = \arg\max_{k\in\mathcal{P}} \dem{k}{t_1},
  \end{equation}
  and let $b$ represent the consumer from subset $\mathcal{P}$ with
  the highest value of demand in time slots $t_2$, \cui{i.e.,}
  \begin{equation}
    \label{eq:5}
    b =    \arg\max_{k\in\mathcal{P}} \dem{k}{t_2} .
  \end{equation}
  From (\ref{eq:lem4}) we have
\begin{equation}
  \label{eq:6}
   \sum_j\min{(\dem{a}{t_1},\dem{j}{t_1})} =
      \sum_j\min{(\dem{a}{t_2},\dem{j}{t_2})},\quad\forall t_1, t_2,
\end{equation}
and
\begin{equation}
  \label{eq:7}
   \sum_j\min{(\dem{b}{t_1},\dem{j}{t_1})} =
      \sum_j\min{(\dem{b}{t_2},\dem{j}{t_2})},\quad\forall t_1, t_2.
\end{equation}
Combining (\ref{eq:lem4}) and (\ref{eq:4}) leads to
\begin{equation}
  \label{eq:8}
  \sum_j\min{(\dem{a}{t_1},\dem{j}{t_1})} \geq \sum_j\min{(\dem{b}{t_1},\dem{j}{t_1})};
\end{equation}
combining (\ref{eq:lem4}) and (\ref{eq:5}) leads to
\begin{equation}
  \label{eq:9}
  \sum_j\min{(\dem{a}{t_2},\dem{j}{t_2})} \leq
  \sum_j\min{(\dem{b}{t_2},\dem{j}{t_2})} .
\end{equation}
If $\dem{a}{t_1}\neq \dem{b}{t_1}$ or $\dem{a}{t_2}\neq \dem{b}{t_2}$,
(\ref{eq:8}) holds with strict inequality. With 
(\ref{eq:6}), (\ref{eq:7}), and (\ref{eq:8}), we have
\[\sum_j\min{(\dem{a}{t_2},\dem{j}{t_2})}>
\sum_j\min{(\dem{b}{t_2},\dem{j}{t_2})},\]
which contradicts (\ref{eq:9})
. If $\dem{a}{t_1} = \dem{b}{t_1}$ and $\dem{a}{t_2}=
\dem{b}{t_2}$, 
(\ref{eq:6}) and (\ref{eq:7}) imply $\dem{a}{t_1} =
\dem{a}{t_2}$ and $\dem{b}{t_1}= \dem{b}{t_2}$, respectively, which contradicts
that $a,b\in \mathcal{P}$. This implies that the set $\mathcal{P}$ is
empty, which contradicts that $\dem{}{t_1}\neq
 \dem{}{t_2}$.

 Hence we have \[\dem{i}{t_1} = \dem{i}{t_2} \qquad \forall i,t_1 ,
 t_2,\] and the solution is given by $\dem{i}{t} =
 \demConstraint{i}/\numTimeSlots,~ \forall i,t$. Under the necessary
 conditions for NEP (\ref{eq:prf42}), this is the only solution for
 the set $\dem{}{}$, hence NEP is unique.
\end{proof}
}
{\bf \noindent Remark:} Notice that \cui{under the} zero-revenue model, the 
NEP \cui{point} is the
same with both increasing-block pricing and average-cost based
pricing. For both the cases, at NEP, we have $\dem{i}{t} =
\demConstraint{i}/\numTimeSlots,~ \forall i,t$. However, even though
the loading pattern is similar, the payments $\payment{i}{t}$ made by
the consumers will differ and, with increasing-block pricing, \cui{will} 
likely 
be lesser for consumers with relatively \cui{lower} 
consumption. 
\cui{In addition, with increasing-block pricing},
the maximum payment $\payment{i}{t}$ made by
the $i$th consumer given $\dem{i}{t}$ demand will be
$\totalFuelCost(\numPlayer \dem{i}{t})/\numPlayer$, irrespective of
what other consumers demand and consume. Thus this addresses the issue
faced \cui{under the} 
average-cost based pricing and zero-revenue model, in which one
consumer can increase their demand indefinitely and cause indefinite
increase in the payments of all other consumers.


\subsection{Constant-Rate Revenue Model}
\label{sec:const-rate-reve-mod2}

The consumer \cui{load balancing} 
problem for 
consumer $i$ is given by
the following optimization problem:
      \begin{equation*}
      \begin{split}
        \text{maximize}\quad & \payoff{i}{}(\dem{i}{})=
        \sum_t\left(\revenue{i}{t}-\payment{i}{t}\right)\\
        \text{subject to}\quad &\revenue{i}{t} = \rateofrevenue{i}{t}
        \dem{i}{t},\quad\forall t,\\
        &\payment{i}{t} = \int_0^{\dem{i}{t}}\IBP(\demandSymbol, \dem{}{t})d\demandSymbol,\quad\forall t,\\
        & \sum_t\dem{i}{t}
        \geq\demConstraint{i},
\\
        &0\leq \dem{i}{t},\quad\forall  t.\\
      \end{split}
    \end{equation*}

    Here again, we assume  $\demConstraint{i}=0,~\forall i$, to avoid any negative payoffs and \cui{we could agree 
    for the redundant constraint}
    $\dem{i}{t}\leq\maxWholeSaleCapacity,~\forall~ i,~ t$, 
    which in turn makes the feasible region for 
    $\Dem{}{}$ finite and hence compact. 

\onlyConf{The proof for the existence of NEP for this game under the given assumptions is provided in \cite{dummy2011}.}
\onlyJournal{
    In this case, we briefly show that with increasing-block pricing
    and a constant-rate for revenue, there exists an NEP solution for this
    game. First, the cost function in the corresponding routing game
    is given by
\begin{equation}
\linkcost{i}{t}=\payment{i}{t} - \rateofrevenue{i}{t}\times
  \dem{i}{t}\label{eq:ng15},
\end{equation}
where $\payment{i}{t}$ is continuous in $\dem{}{t}$, and therefore
$\linkcost{i}{t}$ is continuous in $\dem{}{t}$ and satisfies
assumption G2.  We have already shown that $\payment{i}{t}$ under
the increasing-block pricing scheme is convex in $\dem{i}{t}$ in \cui{the} previous
subsection. The function $- \rateofrevenue{i}{t} \dem{i}{t}$ is linear
and hence convex in $\dem{i}{t}$ as well. Thus, 
$\linkcost{i}{t}$ from
(\ref{eq:ng15}) is convex in $\dem{i}{t}$ and hence satisfies
assumption G3. 
\cui{By the same point-to-set mapping argument as that for Lemma~\ref{lem1}, we can have that there exists
 a fixed point $\dem{}{}\in\Gamma(\dem{}{})$ and such
 a point is NEP.}

}\cui{With} the average-cost based pricing scheme \cui{under the constant-rate revenue} 
model, we see that in a given time slot, if a single consumer enjoys
the maximum rate of revenue, it will be the only consumer who \cui{is} able to purchase power.  We show here that with the increasing-block
pricing scheme under constant-rate revenue model, the 
\cui{result is different}.

For a given time slot $t$, consumer $i$ has an incentive to
increase their demand $\dem{i}{t}$ as long as the payoff increases,
i.e., \onlyJournal{\[\frac{\partial \payoff{i}{}}{\partial \dem{i}{t}}> 0.\]}\onlyConf{$\partial \payoff{i}{}/\partial \dem{i}{t}> 0$.} 
Therefore at the equilibrium the following holds for all consumers:
\begin{equation}
  \label{eq:10s22a}
  \begin{split}
    \frac{\partial \payoff{i}{}}{\partial \dem{i}{t}} &\leq 0 \\
    \Rightarrow \rateofrevenue{i}{t} &\leq \frac{\partial \payment{i}{t}  }{\partial \dem{i}{t}} = \IBP(\dem{i}{t}, \dem{}{t}).
  \end{split}
\end{equation}
Additionally, if $\rateofrevenue{i}{t} < \IBP(\dem{i}{t}, \dem{}{t})$,
 $\linkcost{i}{t}$ can be reduced by reducing $\dem{i}{t}$. This
implies that if $\dem{i}{t}>0$, at the equilibrium we have
\begin{equation}
  \label{eq:10s22b}
  \rateofrevenue{i}{t} \geq \IBP(\dem{i}{t}, \dem{}{t}).
\end{equation}
Thus (\ref{eq:10s22a}) and (\ref{eq:10s22b}) together imply \cui{that}, if
$\dem{i}{t}>0$,  we have \onlyJournal{\[\rateofrevenue{i}{t} = \IBP(\dem{i}{t},
\dem{}{t}).\]}\onlyConf{$\rateofrevenue{i}{t} = \IBP(\dem{i}{t},
\dem{}{t})$.} Together we can write the following set of necessary
conditions for equilibrium,
\begin{equation}
\begin{array}{rlr}
    \rateofrevenue{i}{t} &= \IBP(\dem{i}{t}, \dem{}{t})& \text{if}~\rateofrevenue{i}{t} \geq \IBP(0, \dem{}{t}),\\
    \dem{i}{t} &=0 & \text{if}~\rateofrevenue{i}{t} < \IBP(0, \dem{}{t}).
\end{array}
\label{eq:neweq2}
\end{equation}

For illustration, we simulate a scenario consisting of 100 consumers,
who have their rate of revenue $\rateofrevenue{i}{t}$ generated from a
uniform distribution ranging \cui{over} $\$0-\$100/\text{MWh}$, where the
marginal cost to the retailer $\totalFuelCostMCP(\cdot)$ is given by
Fig.~\ref{fig:mcgs1}. In Fig.~\ref{fig:simResults} we plot the demand
$\dem{i}{t}$ versus the rate of revenue ($\rateofrevenue{i}{t}$) \cui{at a given time slot $t$, where $\dem{i}{t}$ is evaluated over} $i=\{1,\ldots,100\}$.
The 
equilibrium 
is obtained
by iterative updates of
$\IBP(\cdot)$ and $\dem{}{t}$ until convergence within an error
tolerance \cui{as in (\ref{eq:neweq2})}.

\begin{figure}[htp]
  \centering
  \begin{tikzpicture}[only marks, scale = 0.5, domain=0:12, samples=100]
    \draw[thick,color=gray,step=4cm, dashed] (0,0) grid (12,12);
        \draw[->] (-1,0) -- (12.5,0) node[midway, sloped, below]
{Rate of Revenue $\rateofrevenue{i}{t}$ (\$/MWh)};
\draw[->] (0,-1) -- (0,12.5)
node[midway, sloped, above]         {Quantity Demanded $\dem{i}{t}$ (MWh)};
    \draw plot[mark=*, mark size=2] file {simulation.results.table};
      \draw (0, 12) node[left] {1500};
      \draw (12, 0) node[below] {\$100};

  \end{tikzpicture}
  \caption{Demand $\dem{i}{t}$ versus the rate of revenue
    ($\rateofrevenue{i}{t}$) at equilibrium. Each dot represents a
    particular consumer $i=\{1,\ldots,100\}$.}
  \label{fig:simResults}
\end{figure}
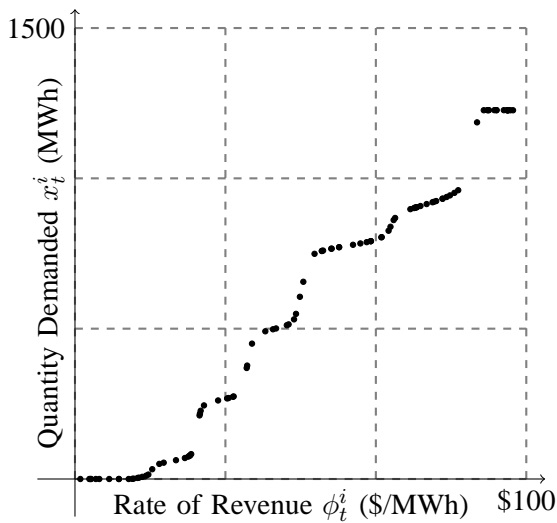

Thus, unlike with the average-cost pricing, where only the consumer with the
maximum 
rate of revenue could purchase electricity at
the equilibrium, any consumer 
\cui{may}
procure a non-zero amount of energy as long
as its own rate of revenue is larger than $\IBP(0, \dem{}{t})$.


\section{Conclusion}
\label{sec:conclusion}

In this paper we formulated noncooperative games among the consumers
of Smart Grid with two real-time pricing schemes to derive autonomous
load balancing \cui{solutions}
. The first pricing scheme charges consumers a
price that is equal to the average cost of electricity borne by the
retailer and the second scheme charges consumers \cui{an amount that is dependent on the incremental marginal cost which is shown to protect consumers from irrational behaviors.} 
Two revenue models were considered for each
of the pricing schemes, \cui{for which} 
we investigated 
the
Nash equilibrium operation points for their uniqueness and load
balancing properties. For the zero-revenue model, we showed that when
consumers are interested only in the minimization of electricity costs,
the Nash equilibrium point is unique \cui{with} both the pricing schemes and
leads to similar electricity loading patterns in \cui{both} 
cases. For the
constant-rate revenue 
model, we showed \cui{the} existence of Nash equilibrium \cui{with} 
both the pricing schemes and 
the uniqueness results \cui{with} 
the
average-cost based pricing scheme. 
\onlyJournal{~\cui{Throughout the paper, we utilized} 
the relationship between the load balancing games and
the atomic splittable flow games from the computer networking
community \cui{to prove the properties at the Nash Equilibrium solutions}.} 


\bibliographystyle{IEEEtran}
\bibliography{bibdata}

\begin{thebibliography}{10}
\providecommand{\url}[1]{#1}
\csname url@samestyle\endcsname
\providecommand{\newblock}{\relax}
\providecommand{\bibinfo}[2]{#2}
\providecommand{\BIBentrySTDinterwordspacing}{\spaceskip=0pt\relax}
\providecommand{\BIBentryALTinterwordstretchfactor}{4}
\providecommand{\BIBentryALTinterwordspacing}{\spaceskip=\fontdimen2\font plus
\BIBentryALTinterwordstretchfactor\fontdimen3\font minus
  \fontdimen4\font\relax}
\providecommand{\BIBforeignlanguage}[2]{{%
\expandafter\ifx\csname l@#1\endcsname\relax
\typeout{** WARNING: IEEEtran.bst: No hyphenation pattern has been}%
\typeout{** loaded for the language `#1'. Using the pattern for}%
\typeout{** the default language instead.}%
\else
\language=\csname l@#1\endcsname
\fi
#2}}
\providecommand{\BIBdecl}{\relax}
\BIBdecl

\bibitem{allcott2009rethinking}
\BIBentryALTinterwordspacing
H.~Allcott, ``Rethinking real time electricity pricing,'' \emph{CEEPR Working
  Paper 2009-015, MIT Center for Energy and Environmental Policy Research},
  Oct. 2009. [Online]. Available:
  \url{http://web.mit.edu/ceepr/www/publications/workingpapers/2009-015.pdf}
\BIBentrySTDinterwordspacing

\bibitem{borenstein2005time}
S.~Borenstein, ``Time-varying retail electricity prices: Theory and practice,''
  in \emph{Electricity deregulation: Choices and challenges}, J.~Griffin and
  S.~Puller, Eds.\hskip 1em plus 0.5em minus 0.4em\relax Chicago, IL:
  University of Chicago Press, 2005, pp. 317--357.

\bibitem{holland2006short}
S.~Holland and E.~Mansur, ``The short-run effects of time-varying prices in
  competitive electricity markets,'' \emph{The Energy Journal}, vol.~27, no.~4,
  pp. 127--155, 2006.

\bibitem{borenstein2004long}
\BIBentryALTinterwordspacing
S.~Borenstein, ``The long-run effects of real-time electricity pricing,''
  \emph{CSEM Working Paper 133, University of California Energy Institute,
  Berkeley}, Jun. 2004. [Online]. Available:
  \url{http://www.ucei.berkeley.edu/PDF/csemwp133.pdf}
\BIBentrySTDinterwordspacing

\bibitem{faruqui2010rethinking}
A.~Faruqui, R.~Hledik, and S.~Sergici, ``Rethinking prices,'' \emph{Public
  Utilities Fortnightly}, vol. 148, no.~1, pp. 30--39, Jan. 2010.

\bibitem{fahrioglu1999designing}
M.~Fahrioglu and F.~Alvarado, ``{Designing cost effective demand management
  contracts using game theory},'' in \emph{Power Engineering Society 1999
  Winter Meeting, IEEE}, vol.~1.\hskip 1em plus 0.5em minus 0.4em\relax IEEE,
  1999, pp. 427--432.

\bibitem{caron2010incentive}
S.~Caron and G.~Kesidis, ``Incentive-based energy consumption scheduling
  algorithms for the smart grid,'' in \emph{Smart Grid Communications
  (SmartGridComm), 2010 First IEEE International Conference on}, Oct. 2010, pp.
  391 --396.

\bibitem{ibars2010distributed}
C.~Ibars, M.~Navarro, and L.~Giupponi, ``Distributed demand management in smart
  grid with a congestion game,'' in \emph{Smart Grid Communications
  (SmartGridComm), 2010 First IEEE International Conference on}, Oct. 2010, pp.
  495 --500.

\bibitem{tirole1988theory}
J.~Tirole, \emph{The Theory of Industrial Organization}.\hskip 1em plus 0.5em
  minus 0.4em\relax Cambridge, MA: The MIT Press, 1988.

\bibitem{başar1999dynamic}
T.~Ba{\c{s}}ar and G.~Olsder, \emph{Dynamic Noncooperative Game Theory}.\hskip
  1em plus 0.5em minus 0.4em\relax Philadelphia, PA: Society for Industrial and
  Applied Mathematics, 1999.

\bibitem{borenstein2008equity}
\BIBentryALTinterwordspacing
S.~Borenstein, ``Equity effects of increasing-block electricity pricing,''
  \emph{CSEM Working Paper 180, University of California Energy Institute,
  Berkeley}, Nov. 2008. [Online]. Available:
  \url{http://www.ucei.berkeley.edu/PDF/csemwp180.pdf}
\BIBentrySTDinterwordspacing

\bibitem{orda1993competitive}
A.~Orda, R.~Rom, and N.~Shimkin, ``Competitive routing in multiuser
  communication networks,'' \emph{IEEE/ACM Transactions on Networking (TON)},
  vol.~1, no.~5, pp. 510--521, Oct. 1993.

\bibitem{roughgarden2005selfish}
T.~Roughgarden, \emph{Selfish Routing and the Price of Anarchy}.\hskip 1em plus
  0.5em minus 0.4em\relax Cambridge, MA: The MIT Press, 2005.

\bibitem{bhaskar2009equilibria}
U.~Bhaskar, L.~Fleischer, D.~Hoy, and C.~Huang, ``Equilibria of atomic flow
  games are not unique,'' in \emph{Proceedings of the Nineteenth Annual
  ACM-SIAM Symposium on Discrete Algorithms}, 2009, pp. 748--757.

\bibitem{lindeman2001ez}
J.~Lindeman, \emph{EZ-101 Microeconomics}.\hskip 1em plus 0.5em minus
  0.4em\relax Hauppauge NY: Barron's Educational Series, 2001.

\bibitem{roughgarden2005selfish2}
T.~Roughgarden, ``Selfish routing with atomic players,'' in \emph{Proceedings
  of the Sixteenth Annual ACM-SIAM Symposium on Discrete Algorithms}, 2005, pp.
  1184--1185.

\bibitem{kakutani1941generalization}
S.~Kakutani, ``A generalization of brouwer’s fixed point theorem,''
  \emph{Duke Mathematical Journal}, vol.~8, no.~3, pp. 457--459, 1941.

\bibitem{richman2007topological}
O.~Richman and N.~Shimkin, ``Topological uniqueness of the nash equilibrium for
  selfish routing with atomic users,'' \emph{Mathematics of Operations
  Research}, vol.~32, no.~1, pp. 215--232, 2007.

\bibitem{nash1951non}
J.~Nash, ``Non-cooperative games,'' \emph{Annals of Mathematics}, vol.~54,
  no.~2, pp. 286--295, 1951.

\bibitem{nash1953two}
J.~Nash, ``Two-person cooperative games,'' \emph{Econometrica}, vol.~21, no.~1,
  pp. 128--140, 1953.

\end{thebibliography}

\end{document}